\documentclass[preprint,12pt]{elsarticle}

\usepackage{amsmath,amsfonts,amsthm,amssymb,paralist,subfigure,graphicx,amsbsy,float,epsfig,color,xcolor}
\usepackage{cuted,mathtools,lipsum}

\usepackage[ruled,vlined,linesnumbered]{algorithm2e}

\usepackage{stmaryrd,url}

\usepackage{amsthm}

\newtheorem{lem}{Lemma}
\newtheorem{ass}{Assumption}
\newtheorem{theorem}{Theorem}

\newtheorem{rem}{Remark}
\newtheorem{prop}{Proposition}
\newtheorem{cor}{Corollary}
\newtheorem{prob}{Problem Formulation}

\def\mb{\mathbf}

\def\mc{\mathcal}
\def\wh{\widehat}
\def\wt{\widetilde}

\journal{Nonlinear Analysis: Hybrid Systems}

\begin{document}

\begin{frontmatter}

\title{Adding Heavy-Ball Momentum to Distributed Constrained Optimization for Energy Resource Management Subject to Time-Delays and Quantization 
}

\author[Sem]{Mohammadreza Doostmohammadian}
\affiliation[Sem]{Faculty of Mechanical Engineering, Semnan University, Semnan, Postal code: 35131-19111, Iran, Phone number: +9823-31530000, doost@semnan.ac.ir.}
\author[AA]{Sergio Pequito} 
\affiliation[AA]{ Department of Electrical and Computer Engineering, and Institute for Systems and Robotics, Instituto Superior Tecnico, University of Lisbon, Portugal, sergio.pequito@tecnico.ulisboa.pt}

\begin{abstract}
	{This paper studies a hybrid discrete-time signed gradient-consensus method with heavy-ball momentum, paired communication delays, and fixed logarithmic quantization. The method preserves the aggregate supply-demand equality at every iteration. For a static connected undirected graph and bounded delays, a Lyapunov\nobreakdash-Krasovskii analysis establishes practical stability by isolating the discontinuous quantization error as an additive disturbance and closing a computable invariant-sublevel argument. The resulting ergodic residual bound is quadratic in the quantization resolution, and exact convergence to the smooth penalized equality-constrained optimizer is recovered in the unquantized case. For delay\nobreakdash-free unquantized quadratic costs, an exact modal analysis identifies the heavy-ball stability region and optimal parameters, with the predicted rate validated on a fixed sparse network. A reproducible delayed experiment reports all constants and a certificate whose guaranteed steady-state level lies below the reported initialization. A separate mixed generator-battery experiment exercises the signed update directly. Finite smooth penalties approximate, but do not generally enforce, the local box constraints exactly.}
\end{abstract}

\begin{graphicalabstract}
 \includegraphics{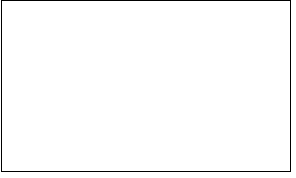}
\end{graphicalabstract}

\begin{highlights}
 \item Addressing optimal convergence and resource-demand feasibility for distributed generator-battery coordination
 \item Addressing communication delays for energy resource management.
 \item Addressing quantization as in real-world communication systems. 
\end{highlights}

\begin{keyword} 
	Distributed constrained optimization \sep energy resource scheduling \sep heavy-ball momentum \sep quantization \sep time-delays 
\end{keyword}

\end{frontmatter}

\section{Introduction} \label{sec_intro}

	Distributed energy resource management systems (DERMS) coordinate distributed energy resources, including solar generation, storage, electric vehicles, flexible loads, and small generators, to satisfy system-level operating objectives \cite{strezoski2023distributed,iweh2021distributed}. In the optimization model studied here, each resource is represented by a local controller that evaluates a private cost gradient and exchanges only a quantized signed-gradient value with neighboring controllers. The controllers collectively minimize the separable operating cost while maintaining the aggregate supply-demand equality. Thus, the DERMS interpretation is a networked economic-dispatch/resource-allocation layer rather than a complete model of distribution-system power flow, voltage regulation, or market clearing.
	
	Distributing the computation can improve scalability, reduce communication bottlenecks, and remove dependence on a single computational coordinator \cite{arc}. It does not, by itself, eliminate every single point of failure: communication cut sets, common infrastructure, and critical agents can still fail. Resilience therefore depends on the network topology and implementation. Parallel hardware and distributed scheduling can exploit resources more efficiently than centralized processing \cite{vidyarthi2008scheduling,EAAI}, supported by advances in convex/nonconvex decomposition \cite{10035454}, consensus methods \cite{garin2010survey}, and asynchronous protocols \cite{jiang2023distributed}.

By distributing control and coordination across many agents (e.g., local controllers, microgrids), DERMS can adapt faster to local demand conditions, optimize resource use with lower latency, and maintain the supply-demand balance more effectively. Decentralized approaches also reduce coordination costs, enable privacy-preserving transactions \cite{zholbaryssov2020privacy}, and allow parallel computation and data processing \cite{zhang2020distributed,li2021parallel} without relying on a central infrastructure. These attributes make DERMS particularly relevant for modern power systems, edge computing applications, and networked resource environments in smart cities \cite{tahmasebi2024security}.

\subsection{Related Work}

	For broad overviews of distributed optimization and its control applications, we refer to \cite{yang2019survey,nedic2018distributed}. These surveys are distinguished below from application papers and individual algorithms.

The literature on distributed optimization for DERMS has evolved from early linear and ideal multi-agent models \cite{amjady2009economic,kar2014distributed,nunna2012multiagent,kar2012distributed,golmohamadi2019multi} to more advanced formulations that account for practical non-idealities.

\textit{DERMS optimization architectures.}
Game-theoretic approaches for microgrid energy management via graph analysis are studied in \cite{10599894}. Decentralized blockchain strategies for distributed energy resources in virtual power plants are developed in \cite{yang2021blockchain}. Machine learning and big-data methods for integrating renewable resources into the grid are investigated in \cite{mostafa2022renewable}, with extensions to hybrid energy sources and battery storage in \cite{kumar2024multi}.

\textit{Distributed constrained optimization for economic dispatch.}
A focused line of research formulates DERMS as distributed constrained optimization, as surveyed in \cite{arc}. Foundational gradient-based methods with anytime feasibility are proposed in \cite{lcss,cherukuri2015distributed,cherukuri2016initialization}. Extensions include adaptive optimization over unbalanced networks \cite{lian2023distributed}, ADMM-based solutions \cite{zhu2025deed,10740459,liu2023distributed0,han2024distributed}, saturated dynamics for power ramp-rate limits \cite{dsp}, multi-step gradient descent for economic dispatch \cite{bai2022weighted}, and consensus-based multi-agent power flow analysis \cite{saric2024distributed}.

\textit{Accelerated distributed optimization.}
The convergence rate of distributed algorithms is a critical concern, as faster convergence translates to fewer communication rounds, lower computation overhead, and reduced risk of oscillations when system conditions change. Acceleration via heavy-ball momentum has been studied in the general distributed optimization literature \cite{xin2019distributed,jin2022momentum}, along with Nesterov-type momentum \cite{zheng2023distributed} and non-Lipschitz sign-based dynamics for fixed-time convergence \cite{spl,liu2023distributed}. Improved convergence for DERMS specifically is also investigated in \cite{shan2024distributed}. Push-sum and push-pull methods \cite{nedic2015distributed,pu2021distributed} address directed and time-varying topologies but typically do not consider quantization or coupling constraints jointly.

\textit{Communication impairments.}
Data quantization due to limited channel bit-rate is a significant practical concern. Uniform quantization for distributed optimization is studied in \cite{michelusi2022finite,10748356,9521982}, and logarithmic quantization in \cite{TASE,kalantari2010logarithmic,lcss2}. Time-delays in distributed algorithms are addressed in several works (e.g., \cite{lcss}), but typically not jointly with quantization and momentum acceleration.

	Two recent delay-robust power-system results clarify the scope of the present contribution. Shi \emph{et al.} \cite{shi2025delayed} treat economic dispatch with heterogeneous delays, bounded generation, adaptive penalties, and a continuous-time primal-dual/scattering construction. Li \emph{et al.} \cite{li2023delay} establish delay-independent stability for distributed secondary frequency control through a reformulated continuous-time primal-dual controller. In contrast, the present method is a discrete-time signed-gradient allocation scheme with exact preservation of the aggregate equality under paired symmetric delays, fixed logarithmic quantization, and heavy-ball momentum. Its delay guarantee is delay-dependent and assumes a static undirected graph; it neither subsumes the scattering-based heterogeneous-delay result nor provides frequency-dynamics guarantees.

\subsection{Gap and Contributions}

{The main technical contribution is a distributed momentum\nobreakdash-based signed gradient-consensus method that preserves the coupling equality at every iteration. In particular, paired delays and fixed logarithmic quantization coexist with exact per\nobreakdash-iteration conservation of the coupling equality. The delayed fixed-quantized recursion is naturally a hybrid dynamical object: continuous resource states evolve through discrete quantizer-cell and delay selections. The practical\nobreakdash-stability theorem is stated for a static connected graph; the feasibility identity itself also holds for paired time-varying undirected edges.} The specific results are as follows:
\begin{enumerate}[(i)]
	\item \textit{Ergodic practical-stability under quantization, momentum, and delays:} {We prove a self-contained Lyapunov\nobreakdash-Krasovskii result. The delay mismatch is decomposed into a smooth-gradient history term and an additive discontinuous-quantization term, and a computable invariant-sublevel condition removes the former a priori orbit assumption. This additive-disturbance treatment avoids any Lipschitz assumption on the discontinuous quantizer and is reusable in delayed quantized-network analyses.} The resulting average residual is explicit and scales as $O(\rho^2)$ (Theorem~\ref{thm_conv}). The exact-convergence result is recovered in the unquantized case $\rho = 0$ (Corollary~\ref{cor_exact_rho0}).
	
	\item \textit{Anytime feasibility under symmetric delays:} We prove that the supply-demand coupling constraint is satisfied at every iteration, even in the presence of log-scale quantization, momentum, and bounded symmetric delays (Lemma~\ref{lem_feas}). 
	
	\item \textit{Quantized equilibria and approximate optimality:} We characterize the equilibria of the fixed-quantized dynamics as quantized-consensus points and show that every such equilibrium is feasible and satisfies an explicit $O(\rho^2)$ optimality-gap bound with respect to the penalized problem (Lemma~\ref{lem_quantized_equilibrium} and Proposition~\ref{prop_gap_quantized_eq}). The bound is stated using the exact logarithmic-quantizer constant and, therefore, remains valid for the quantizer defined.
	
	\item \textit{Spectral momentum characterization:} {For the delay-free unquantized quadratic case, we derive the exact modal stability condition, the optimal heavy-ball pair $(\eta_{\rm HB},\mu_{\rm HB})$, and its worst-case contraction factor (Proposition~\ref{prop_spectral_hb}).}
\end{enumerate}

\subsection{Paper Organization}

Section~\ref{sec_fram} formulates the DERMS optimization problem and states the standing assumptions. Section~\ref{sec_dyn} develops the momentum-based distributed algorithm and its extension to delayed communication. Section~\ref{sec_analysis} provides the feasibility, optimality, and convergence analysis. Section~\ref{sec_sim} presents numerical experiments, and Section~\ref{sec_conc} concludes the paper.

\textit{Notation.} Bold symbols denote vectors and matrices; scalar components and dummy arguments are not bold. The vector $\mb{1}_n$ denotes the all-ones vector of size~$n$. We define the \textit{signed vector} $\mb{s}_n := [\mb{1}_{n_1}; -\mb{1}_{n_2}] \in \mathbb{R}^n$ and the signing matrix $B := \mathrm{diag}(b_i)$ with $b_i = 1$ for generators and $b_i = -1$ for batteries. The operator ``$;$'' denotes column concatenation. We write $\nabla_i$ for $\frac{d f_i}{d z_i}$ and $\nabla_z F$ for the gradient of $F(\cdot)$ with respect to~$\mb{z}$.

\section{Problem Formulation} \label{sec_fram}

\subsection{Constrained Optimization Formulation}

The goal of the energy resource management problem is to minimize the total energy cost across $n_1$ generator nodes (producing power) and $n_2$ battery nodes (storing power). This is formulated as \cite{mahmoodi2015economic}:
\begin{align} \label{eq_dra}
	\min_{\mb{x},\mb{y}}
	~~ & F(\mb{x},\mb{y}) = \sum_{i=1}^{n_1} g_i(x_i) + \sum_{i=1}^{n_2} h_i(y_i),
\end{align}
where $x_i \in \mathbb{R}$ denotes the power output of generator $i \in \{1,\dots,n_1\}$, $y_i \in \mathbb{R}$ denotes the power assigned to battery $i \in \{1,\dots,n_2\}$, and $g_i:\mathbb{R} \to \mathbb{R}$, $h_i:\mathbb{R} \to \mathbb{R}$ are the local generation and storage cost functions, respectively.

For notational convenience, let $n := n_1 + n_2$ and define $\mb{z} = [\mb{x};\mb{y}] \in \mathbb{R}^n$. Further, let $f_i(z_i) := g_i(z_i)$ for $i \in \{1,\dots,n_1\}$ (generators) and $f_i(z_i) := h_{i-n_1}(z_i)$ for $i \in \{n_1+1,\dots,n\}$ (batteries). The problem~\eqref{eq_dra} takes the compact form:
\begin{align} \label{eq_dra2}
	\min_{\mb{z}}
	~~ & F(\mb{z}) = \sum_{i=1}^{n} f_i(z_i).
\end{align}

The optimization~\eqref{eq_dra2} is subject to a coupling equality constraint representing the supply-demand balance \cite{mahmoodi2015economic}:
\begin{align} \label{eq_b}
	\sum_{i=1}^{n_1} x_i  = D + \sum_{i=1}^{n_2} y_i,
\end{align}
where $D \in \mathbb{R}^+$ denotes the total power demand. Equivalently, given initial allocations $d^x_i$ and $d^y_i$ satisfying $\sum_{i=1}^{n_1} d^x_i - \sum_{i=1}^{n_2} d^y_i = D$, the constraint can be written as:
\begin{align} \label{eq_b2}
	\sum_{i=1}^{n_1} (x_i - d^x_i)  =  \sum_{i=1}^{n_2} (y_i - d^y_i).
\end{align}
A key requirement is \textit{anytime resource-demand feasibility}: the coupling constraint~\eqref{eq_b2} must hold at every iteration $k \in \mathbb{N}$ of the algorithm.

In addition, each energy node is subject to local box constraints reflecting physical limits:
\begin{align} \label{eq_box}
	m_i \leq z_i(k) \leq M_i, \quad \forall\, i \in \{1,\dots,n\},~\forall\, k,
\end{align}
where $m_i$ and $M_i$ denote the minimum and maximum power capacity of node~$i$.

\subsection{Standing Assumptions}

\begin{ass}[Strong convexity and Lipschitz gradient] \label{ass_cvx}
	Each local \emph{composite penalized} cost function $f_i:\mathbb{R} \to \mathbb{R}$ used by the dynamics is twice continuously differentiable and globally strongly convex with globally Lipschitz-continuous gradient. Specifically, there exist constants $0 < v \leq u$ such that
		\begin{align} \label{eq_vu}
			2v \leq \partial^2_{z_i} f_i(z_i) \leq 2u, \quad \forall\, z_i \in \mathbb{R},~\forall\, i.
		\end{align}
		The global domain is required because a finite smooth penalty does not project an iterate onto $[m_i,M_i]$.
\end{ass}

\begin{rem}
	{Under the convention~\eqref{eq_vu}, the Hessian strong-convexity modulus is $2v$, while the quadratic lower bound is $F(\mb z)-F(\mb z^\star)\geq v\|\mb z-\mb z^\star\|^2$ on the feasible affine space.} For the quadratic costs $f_i(z_i) = \gamma_i z_i^2 + \beta_i z_i + \alpha_i$ with $\gamma_i > 0$, Assumption~\ref{ass_cvx} holds with $v = \min_i \gamma_i$ and $u = \max_i \gamma_i$.
\end{rem}

\subsection{Log-Scale Quantization}

In practical DERMS, data exchanged over communication channels is subject to quantization due to limited bit-rate. This paper adopts log-scale quantization \cite{TASE,kalantari2010logarithmic,lcss2}, which yields a significantly smaller optimality gap near convergence compared to uniform quantization \cite{michelusi2022finite,10748356,9521982}. To make the unquantized limit well defined, we distinguish it from the logarithmic formula:
	\begin{align}
		q_0(s)&:=s, \nonumber\\
		q_\rho(0)&:=0,\qquad \rho>0,\nonumber\\
		q_\rho(s)&:=\mbox{sgn}(s)\exp\left(\rho \left[\dfrac{\log(|s|)}{\rho}\right] \right),
		\quad \rho>0,~s\ne0. \label{eq_quan_log}
	\end{align}

\noindent where $[\cdot]$ denotes rounding to the nearest integer and $\rho$ is the quantization resolution parameter. For $\rho>0$, this mapping is odd ($q_\rho(-s) = -q_\rho(s)$) and, by the rounding-to-nearest construction, satisfies the exact multiplicative bound
\begin{align}
	e^{-\rho/2} \leq \frac{q(s)}{s} \leq e^{\rho/2}, \quad \forall\, s \neq 0. \label{eq_quan_sect}
\end{align}
Define
\begin{align}
	\delta_\rho := e^{\rho/2} - 1. \label{eq_delta_rho_def}
\end{align}
Then~\eqref{eq_quan_sect} implies the componentwise relative-error bound
\begin{align}
	|q(s)-s| \leq \delta_\rho |s|, \quad \forall\, s \in \mathbb{R}. \label{eq_quan_relerr}
\end{align}
As illustrated in Fig.~\ref{fig_quant}, logarithmic quantization is \textit{strongly sign-preserving}: $\frac{q(s)}{s}$ remains bounded away from zero for all $s \neq 0$. By contrast, uniform quantization has a dead-zone around the origin that prevents exact convergence (see Section~\ref{sec_sim}).

\begin{figure}[!h]
	\centering
	\includegraphics[width=3.95in]{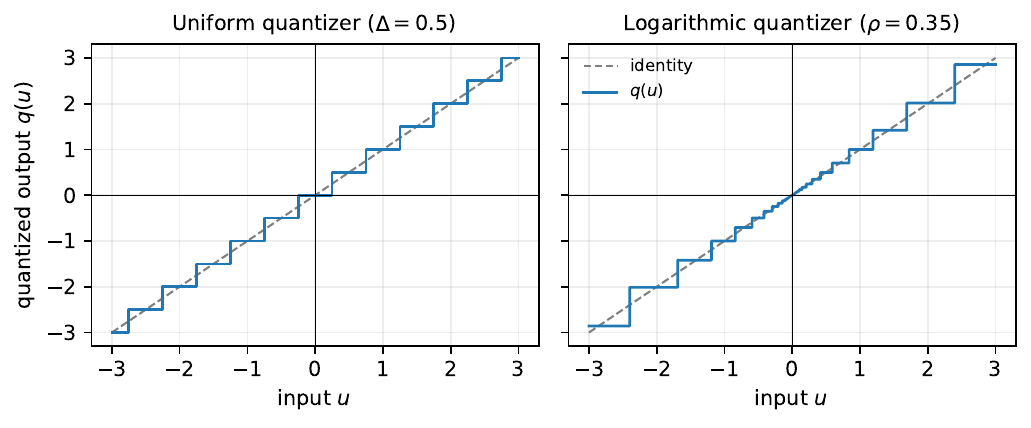}
	\caption{{Uniform quantization (left) and logarithmic quantization (right).}}
	\label{fig_quant}
\end{figure}

\begin{rem} \label{rem_sector}
	The subsequent analysis uses the odd symmetry and componentwise relative-error property of the quantizer. In particular, with $e_q(s) := q_\rho(s) - s$, \eqref{eq_quan_relerr} gives $|e_q(s)| \leq \delta_\rho |s|$ for all $s \in \mathbb{R}$, where $\delta_\rho = e^{\rho/2}-1 = O(\rho)$. {The proof never treats $q_\rho$ as a Lipschitz mapping: its discontinuous error is isolated as an additive disturbance.} For the fixed-quantized dynamics, this yields residual terms of order $O(\delta_\rho^2)=O(\rho^2)$ in the Lyapunov descent inequality; exact convergence is recovered in the special case $\rho = 0$.
\end{rem}

\begin{rem}[Alphabet size and finite-bit implementation] \label{rem_bits}
		The ideal logarithmic quantizer in~\eqref{eq_quan_log} has countably many nonzero levels and is the object covered by the theorem. A practical implementation may retain indices $\ell_{\min},\ldots,\ell_{\max}$. With $M_q=\ell_{\max}-\ell_{\min}+1$, the signed levels plus zero has $2M_q+1$ symbols and requires $\lceil\log_2(2M_q+1)\rceil$ bits per scalar (apart from protocol overhead). The relative-error analysis remains valid while all transmitted magnitudes lie within the retained range. Underflow or saturation adds a separate bounded error and is not covered by Theorem~\ref{thm_conv}; the selected dynamic range must therefore be reported in a finite-bit deployment.
\end{rem}

\subsection{Communication Network}

	The communication network among energy nodes is modeled as a static undirected graph $\mc{G}=(\mc V,\mc E)$. Let $A=A^\top\geq0$ be its weighted adjacency matrix, $\mc N_i:=\{j:A_{ij}>0\}$, $\mc D:=\operatorname{diag}(\sum_jA_{ij})$ its degree matrix, and $L=\mc D-A$ its Laplacian. We denote the smallest positive and largest eigenvalues of $L$ by $\lambda_2$ and $\lambda_n$, and the maximum weighted degree by $d_{\max}:=\max_i\sum_jA_{ij}$. The static assumption is part of the corrected practical-stability theorem.

\begin{ass}[Connectivity] \label{ass_conn}
	{The graph $\mc G$ is connected, i.e., $\lambda_2>0$.}
\end{ass}

\begin{rem}
		This assumption is stronger than $B$-connectivity. Under mere $B$-connectivity an instantaneous Laplacian can have $\lambda_2(L(k))=0$, invalidating the one-step coercivity used below. A changing graph also requires a delay model that specifies messages on appearing and disappearing edges. Therefore, no convergence claim under $B$-connectivity is made; it would require a windowed Lyapunov analysis.
\end{rem}

\textit{Time-delays:}
The links in $\mc{G}$ may be subject to communication delays. Let $\tau_{ij}(k) \geq 0$ denote the delay at link $(j,i)$ at iteration~$k$.

\begin{ass}[Bounded symmetric delays] \label{ass_delay}
	The delays satisfy: (i)~$\tau_{ij}(k) \leq \bar\tau < \infty$ for all links $(j,i)$ and all $k$; and (ii)~$\tau_{ij}(k) = \tau_{ji}(k)$ for all active links and all~$k$. The delays are otherwise arbitrary (possibly time-varying and heterogeneous across links).
\end{ass}

\begin{rem}
	The symmetry condition $\tau_{ij}(k) = \tau_{ji}(k)$ ensures that for every undirected edge $\{i,j\}$, the two endpoint updates use the same delayed pair at each iteration. This is essential for the pairwise cancellation argument in the feasibility proof (Lemma~\ref{lem_feas}). {It is an implementation requirement rather than a property of arbitrary wireless delays: a time-stamped edge transaction, handshake, or paired-release buffer can make both endpoints apply the same delay index. Node $i$ must know the realized per-edge delay and retain its own past quantized gradients so that $q(\nabla_i(k-r))$ is paired with the received $q(\nabla_j(k-r))$.} Fully directional asynchronous delays would require a redesigned conservation mechanism, which we leave for future work.
\end{rem}

\subsection{Problem Statement}

\begin{prob} \label{prob_main}
	{Compute a distributed solution of the \emph{smooth penalized approximation} of~\eqref{eq_dra2} on the coupling hyperplane~\eqref{eq_b2}, under Assumptions~\ref{ass_cvx}-\ref{ass_delay} and with log-quantized channels~\eqref{eq_quan_log}. The original box-constrained problem is recovered exactly only under the exact-penalty conditions of Proposition~\ref{prop_penalty}, or approached as the smooth penalty is sharpened.} The solution must satisfy:
	\begin{enumerate}[(i)]
		\item \textit{Anytime equality feasibility:} the coupling constraint~\eqref{eq_b2} holds at every iteration; {finite penalties do not imply per-iterate box feasibility};
		\item \textit{Practical convergence under fixed quantization:} under bounded symmetric delays and connected graphs, the iterates satisfy an explicit ergodic residual bound for the penalized problem; exact convergence is recovered in the unquantized case $\rho=0$;
		\item \textit{Acceleration:} convergence is improved via heavy-ball momentum.
	\end{enumerate}
\end{prob}

\section{Momentum-Based Distributed Optimization} \label{sec_dyn}

This section develops the proposed distributed algorithm for solving Problem~\ref{prob_main}.

\subsection{Handling Box Constraints via Penalty Functions} \label{sec_penalty}

	To approximate the box constraints~\eqref{eq_box} within the distributed dynamics, we adopt a penalty-based approach \cite{bertsekas1975necessary,wu2021new,10925624}. Each original local cost $f_i^0(z_i)$ is augmented with a penalty term $f_i^\sigma(z_i)$, and the algorithm optimizes the \emph{penalized} composite function $\wh{f}_i(z_i) := f_i^0(z_i) + f_i^\sigma(z_i)$. A finite smooth penalty does not guarantee $z_i(k)\in[m_i,M_i]$ at every iteration; ``anytime feasibility'' in this paper refers only to the coupling equality~\eqref{eq_b2}.

Two choices for the penalty term are
\begin{align} \label{eq_log_penalty0}
	f_i^\sigma(z_i) &:= \sigma [z_i - M_i]^+ + \sigma [m_i - z_i]^+,
\end{align}
{with $[u]^+ := \max \{u,0\}$ and $\sigma>0$; and the smooth softplus approximation}
\begin{align} \label{eq_log_penalty}
	f_i^\sigma(z_i) &:= \frac{\sigma}{\kappa} \log (1+e^{\kappa (z_i - M_i)}) + \frac{\sigma}{\kappa} \log (1+e^{\kappa (m_i - z_i)}),
\end{align}
with $\kappa>0$.

\begin{rem}[Effect of penalty on problem constants] \label{rem_penalty}
	Adding the penalty modifies the smoothness constant. If the original Hessian is bounded above by $2u_0$, then the two softplus terms contribute
		\[
		\sigma\kappa\{s_+(1-s_+)+s_-(1-s_-)\}\leq \sigma\kappa/2,
		\]
		where $s_+=\bigl(1+e^{-\kappa(z_i-M_i)}\bigr)^{-1}$ and $s_-=\bigl(1+e^{-\kappa(m_i-z_i)}\bigr)^{-1}$. Thus the conservative global choice $\wh u=u_0+\sigma\kappa/4$ is valid. {Writing $W_i=M_i-m_i$ and splitting the real line at the box midpoint gives the sharper global bound
			\begin{align}
				\wh u\leq u_0+\frac{\sigma\kappa}{8}
				+\frac{\sigma\kappa}{2}e^{-\kappa W_i/2}, \label{eq_penalty_u_sharp}
			\end{align}
			because one logistic curvature factor is at most $1/4$ and the other at most $e^{-\kappa W_i/2}$.} Every theorem coefficient must use a valid \emph{composite} $\wh u$, not the curvature of the unpenalized quadratic alone. For notational simplicity, $f_i,F,u$ below denote the smooth composite cost and its constants.
\end{rem}

\begin{prop}[Exact hinge penalty and smooth approximation]
	\label{prop_penalty}
		Let $\mc Z_D:=\{\mb z:\mb s_n^\top\mb z=D\}$, let $\mb z^\star_{\rm box}$ be the unique optimizer of the original cost $F_0$ over $\mc Z_D\cap[m,M]$, and suppose KKT multipliers $\lambda_i^+,\lambda_i^-\geq0$ exist for $z_i\leq M_i$ and $z_i\geq m_i$. Define
		\[
		\sigma_{\rm ex}:=\max_i\{\lambda_i^+,\lambda_i^-\}.
		\]
		For every $\sigma>\sigma_{\rm ex}$, the hinge-penalized optimizer $\mb z_\sigma^\star$ over $\mc Z_D$ equals $\mb z^\star_{\rm box}$. For the softplus-penalized optimizer $\mb z_{\sigma,\kappa}^\star$ and strong-convexity constant $v$,
		\begin{align}
			\|\mb z_{\sigma,\kappa}^\star-\mb z_\sigma^\star\|
			\leq \sqrt{\frac{2n\sigma\log 2}{v\kappa}}, \qquad
			\mb z_{\sigma,\kappa}^\star\to \mb z_\sigma^\star~~(\kappa\to\infty). \label{eq_penalty_gap}
		\end{align}
		Consequently, a fixed finite $(\sigma,\kappa)$ generally solves only a smooth approximation; it need not satisfy the box constraints exactly.
\end{prop}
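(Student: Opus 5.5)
The proof has two independent parts: an exact-penalty argument for the hinge penalty~\eqref{eq_log_penalty0}, and a uniform-approximation argument for the softplus penalty~\eqref{eq_log_penalty}.

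\emph{Exact hinge penalty.} Write $P_\sigma(\mb z):=\sum_i\sigma([z_i-M_i]^++[m_i-z_i]^+)$ and $\Phi_\sigma:=F_0+P_\sigma$, which is convex on $\mc Z_D$. The KKT conditions at $\mb z^\star_{\rm box}$ give a scalar multiplier $\nu$ for the coupling constraint such that
\[
\nabla F_0(\mb z^\star_{\rm box})=\nu\mb s_n-\boldsymbol\lambda^++\boldsymbol\lambda^-,
\]
together with complementary slackness. I will show that $\mb z^\star_{\rm box}$ satisfies the optimality condition of $\Phi_\sigma$ on $\mc Z_D$, namely $0\in\partial\Phi_\sigma(\mb z^\star_{\rm box})+\mathrm{span}(\mb s_n)$. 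The subdifferential of $\sigma[z_i-M_i]^+$ at an active upper bound is $[0,\sigma]$, and it contains $\lambda_i^+$ whenever $\sigma\ge\lambda_i^+$; the lower bound is handled the same way. At inactive bounds complementary slackness gives $\lambda_i^\pm=0$, and $0$ lies in the subdifferential there. This shows $\mb z^\star_{\rm box}$ minimizes $\Phi_\sigma$ over $\mc Z_D$.

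Uniqueness uses the strict inequality $\sigma>\sigma_{\rm ex}$. Let $\mb w$ be any minimizer of $\Phi_\sigma$ on $\mc Z_D$. The Lagrangian bound gives
\[
F_0(\mb w)\ge F_0(\mb z^\star_{\rm box})-\sum_i\bigl(\lambda_i^+[w_i-M_i]^++\lambda_i^-[m_i-w_i]^+\bigr).
\]
Hence
\[
\Phi_\sigma(\mb w)-\Phi_\sigma(\mb z^\star_{\rm box})\ge(\sigma-\sigma_{\rm ex})\sum_i\text{(violations of }\mb w\text{)}.
\]
Since $\mb w$ is a minimizer, the left side is nonpositive, so all violations vanish and $\mb w$ is box-feasible. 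Then $\mb w$ minimizes $F_0$ over $\mc Z_D\cap[m,M]$, and it equals $\mb z^\star_{\rm box}$ by the assumed uniqueness of the constrained optimizer (strict convexity also gives this directly).

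\emph{Softplus approximation.} The key estimate is the scalar inequality $0\le\frac1\kappa\log(1+e^{\kappa t})-[t]^+\le\frac{\log2}{\kappa}$, whose maximum gap occurs at $t=0$. Summing over both terms and all $i$, with $\Phi_{\sigma,\kappa}:=F_0+\sum_if_i^\sigma$ for the softplus choice, yields the uniform bound
\[
0\le\Phi_{\sigma,\kappa}-\Phi_\sigma\le\frac{2n\sigma\log2}{\kappa}=:\epsilon .
\]
Strong convexity of $F_0$ with the constant convention $F(\mb z)-F(\mb z^\star)\ge v\|\mb z-\mb z^\star\|^2$ on the affine set (Remark after Assumption~\ref{ass_cvx}) applies to $\Phi_\sigma$, because adding the convex hinge term preserves the quadratic-growth bound at its minimizer over $\mc Z_D$. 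The inequality follows from the chain
\[
v\|\mb z^\star_{\sigma,\kappa}-\mb z^\star_\sigma\|^2\le\Phi_\sigma(\mb z^\star_{\sigma,\kappa})-\Phi_\sigma(\mb z^\star_\sigma)\le\Phi_{\sigma,\kappa}(\mb z^\star_{\sigma,\kappa})-\Phi_{\sigma,\kappa}(\mb z^\star_\sigma)+\epsilon\le\epsilon .
\]
Here the middle step uses $\Phi_\sigma\le\Phi_{\sigma,\kappa}$ at $\mb z^\star_{\sigma,\kappa}$ and $\Phi_{\sigma,\kappa}\le\Phi_\sigma+\epsilon$ at $\mb z^\star_\sigma$, and the last step uses optimality of $\mb z^\star_{\sigma,\kappa}$. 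Taking square roots gives~\eqref{eq_penalty_gap}, and letting $\kappa\to\infty$ gives the convergence claim.

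The final consequence follows because for finite $\kappa$ the softplus gradient $\sigma(s_+-s_-)$ is strictly nonzero inside the box. The smooth optimizer therefore generally differs from $\mb z^\star_{\rm box}$ and may lie outside $[m,M]$.

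\emph{Main obstacle.} The delicate step is the quadratic-growth inequality for the nonsmooth $\Phi_\sigma$ with constant $v$. I would justify it from strong convexity of $F_0$ together with the first-order optimality of $\mb z^\star_\sigma$: choose a subgradient $g\in\partial P_\sigma(\mb z^\star_\sigma)$ such that $\nabla F_0(\mb z^\star_\sigma)+g\perp(\mc Z_D-\mb z^\star_\sigma)$, then add the strong-convexity inequality for $F_0$ to the subgradient inequality for $P_\sigma$. This also requires the constant convention $2v\le f_i''$ to apply to the original costs $F_0$. That is consistent with the paper's use of $v$, but it should be stated explicitly.
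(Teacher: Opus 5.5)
Your proposal is correct and follows the paper's argument: the paper cites the convex exact $\ell_1$-penalty theorem, which you derive directly from the KKT conditions. It then uses the same one-sided sandwich $0\le\Phi_{\sigma,\kappa}-\Phi_\sigma\le 2n\sigma\log2/\kappa$ together with quadratic growth at a minimizer.

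Two small remarks. First, your ``main obstacle'' can be avoided by taking the growth inequality for the smooth softplus objective at its own minimizer, $v\|\mb z^\star_{\sigma,\kappa}-\mb z^\star_\sigma\|^2\le\Phi_{\sigma,\kappa}(\mb z^\star_\sigma)-\Phi_{\sigma,\kappa}(\mb z^\star_{\sigma,\kappa})\le\Phi_\sigma(\mb z^\star_\sigma)+2n\sigma\log2/\kappa-\Phi_\sigma(\mb z^\star_{\sigma,\kappa})\le 2n\sigma\log2/\kappa$; this uses Assumption~\ref{ass_cvx} for the composite cost directly and needs neither a subgradient argument nor a separate curvature bound on $F_0$. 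Second, $\sigma(s_+-s_-)$ vanishes at the box midpoint $z_i=(m_i+M_i)/2$, where $s_+=s_-$, so it is not strictly nonzero inside the box, and the closing ``need not satisfy the box constraints'' remark should rest on an explicit instance rather than on that assertion.
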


\begin{proof}
		The first claim is the convex exact $\ell_1$-penalty theorem: the KKT inequality multipliers give the threshold above which every violation costs more than its possible first-order decrease in $F_0$ \cite{bertsekas1975necessary}. For the smooth approximation, $0\leq \kappa^{-1}\log(1+e^{\kappa t})-[t]^+\leq(\log2)/\kappa$. Thus the softplus objective lies \emph{one-sidedly} between the hinge objective and the hinge objective plus $\Delta:=2n\sigma\log2/\kappa$. Since the Hessian convention in Assumption~\ref{ass_cvx} is $\nabla^2F\succeq2vI$, optimality of the two minimizers gives $v\|\mb z_{\sigma,\kappa}^\star-\mb z_\sigma^\star\|^2\leq\Delta$, which proves~\eqref{eq_penalty_gap}. {The one-sided comparison is why no additional factor of two appears.}
\end{proof}

\begin{rem}[Scope of the subsequent convergence claims]
	\label{rem_scope_claims}
	Proposition~\ref{prop_penalty} concerns only the relation between the original box-constrained problem and its penalized counterparts.
	The dynamical results proved later in Theorem~\ref{thm_conv} and Corollary~\ref{cor_exact_rho0} apply to the penalized distributed dynamics.
	In particular, for fixed quantization level \(\rho>0\), the analysis yields practical stability and an \(O(\rho^2)\) residual neighborhood around the penalized optimizer.
	Exact asymptotic convergence is recovered in the unquantized case \(\rho=0\).
	{The theorem requires the smooth penalty~\eqref{eq_log_penalty}; the nonsmooth hinge~\eqref{eq_log_penalty0} is outside Assumption~\ref{ass_cvx}. Relation to the box-constrained optimizer is obtained separately through Proposition~\ref{prop_penalty}, and no per-iterate box-feasibility claim is made.}
\end{rem}

\subsection{Delay-Free Dynamics} \label{sec_delayfree}

We present the algorithm first in the delay-free case ($\bar\tau = 0$). The dynamics is based on a \textit{signed gradient-consensus} mechanism: each node drives its signed local gradient $b_i \nabla_i$ toward agreement with its neighbors, which at equilibrium yields the optimality condition (Lemma~\ref{lem_z*}). 

Each energy node $i$ updates {(we, henceforth, write $q=q_\rho$)}:
\begin{align} \label{eq_sol_nodelay}
	z_i(k+1) &= z_i(k) + \eta b_i \sum_{j \in \mc{N}_i} A_{ij} \Big( b_j q(\nabla_j(k)) - b_i q(\nabla_i(k)) \Big) \\
	&\qquad\qquad+ \mu w_i(k), \notag \\
	\label{eq_w_nodelay}
	w_i(k+1) &= z_i(k+1) - z_i(k),
\end{align}
where $\eta > 0$ is the step size and $0 < \mu < 1$ is the heavy-ball momentum coefficient. Setting $\mu = 0$ recovers the non-accelerated dynamics.

\textit{Compact form:} Define the \textit{signed gradient} $\boldsymbol{\phi}(k) := B \nabla_z F(\mb{z}(k))$ and its quantized version $\mb{g}(k) := B\, q(\nabla_z F(\mb{z}(k)))$, where quantization is applied componentwise. With $\mu = 0$, the dynamics~\eqref{eq_sol_nodelay} reads in vector form:
\begin{align} \label{eq_L}
	\mb{z}(k+1) &= \mb{z}(k) - \eta\, B\, L\, \mb{g}(k)\notag \\
	&= \mb{z}(k) - \eta\, B\, L\, B\, q(\nabla_z F(\mb{z}(k))).
\end{align}
Note the two occurrences of $B$: the outer $B$ arises from the $b_i$ prefactor in~\eqref{eq_sol_nodelay}, and the inner $B$ converts the quantized gradient $q(\nabla_z F)$ into the signed quantity~$\mb{g}$.

\textit{Initialization:} For anytime feasibility, the initial states must satisfy:
\begin{align} \label{eq_init}
	z_i(0) = d_i, \quad w_i(0) = 0, \quad \forall\, i,
\end{align}
where $d_i := d^x_i$ for generators and $d_i := d^y_i$ for batteries, with $\sum_{i=1}^{n_1} d^x_i - \sum_{i=1}^{n_2} d^y_i = D$.

\begin{prop}[Spectral heavy-ball characterization for quadratic costs] \label{prop_spectral_hb}
		Consider the static, delay-free, unquantized case and a quadratic composite cost
		$F(z)=\frac12z^\top Qz+r^\top z$, where $Q=\operatorname{diag}(h_i)\succ0$.
		Let $0<\theta_{\min}\leq\theta_{\max}$ be the smallest and largest positive eigenvalues of $Q^{1/2}LQ^{1/2}$. On the equality\nobreakdash-feasible subspace, every modal error satisfies
		\begin{align}
			e_\theta(k+1)=(1+\mu-\eta\theta)e_\theta(k)-\mu e_\theta(k-1). \label{eq_hb_mode}
		\end{align}
		Thus all modes are Schur stable if and only if
		\begin{align}
			0\leq\mu<1,\qquad 0<\eta\theta_{\max}<2(1+\mu). \label{eq_hb_schur}
		\end{align}
		The minimax heavy-ball choice and contraction factor are
		\begin{align}
			\eta_{\rm HB}&=\frac{4}{(\sqrt{\theta_{\max}}+\sqrt{\theta_{\min}})^2},\nonumber\\
			\mu_{\rm HB}&=\left(\frac{\sqrt{\theta_{\max}}-\sqrt{\theta_{\min}}}{\sqrt{\theta_{\max}}+\sqrt{\theta_{\min}}}\right)^2, \label{eq_hb_opt}\\
			r_{\rm HB}&=\frac{\sqrt{\varkappa}-1}{\sqrt{\varkappa}+1},\qquad \varkappa:=\theta_{\max}/\theta_{\min}. \label{eq_hb_rate}
		\end{align}
		For comparison, optimally tuned gradient descent has factor $r_{\rm GD}=(\varkappa-1)/(\varkappa+1)$. Hence, $r_{\rm HB}<r_{\rm GD}$ whenever $\varkappa>1$.
\end{prop}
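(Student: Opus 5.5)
The plan is to reduce the heavy-ball recursion to decoupled scalar second-order recursions and then use the classical Polyak analysis.

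\textbf{Linearization.} In the unquantized delay-free case $q(s)=s$ and $\nabla_z F(\mb z)=Q\mb z+r$. Writing the momentum form $\mb z(k+1)=\mb z(k)-\eta BLB\,\nabla_z F(\mb z(k))+\mu(\mb z(k)-\mb z(k-1))$ and subtracting the penalized optimizer $\mb z^\star$, we use that $BLB\nabla_zF(\mb z^\star)=0$ (Lemma~\ref{lem_z*}, consensus of signed gradients). The error $\mb e=\mb z-\mb z^\star$ then obeys $\mb e(k+1)=(1+\mu)\mb e(k)-\eta BLBQ\,\mb e(k)-\mu\mb e(k-1)$.

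\textbf{Diagonalization.} Since $B$ is a diagonal $\pm1$ matrix, $\tilde L:=BLB$ is symmetric PSD with kernel spanned by $B\mb 1_n=\mb s_n$. Its eigenvalues coincide with those of $L$. Set $\mb p=Q^{1/2}\mb e$. Then
\[
\mb p(k+1)=(1+\mu)\mb p(k)-\eta\,Q^{1/2}\tilde LQ^{1/2}\mb p(k)-\mu\mb p(k-1).
\]
The matrix $S=Q^{1/2}\tilde LQ^{1/2}$ is symmetric and congruent to $\tilde L$, so its kernel is one-dimensional. Conservation gives $\mb s_n^\top\mb e(k)=0$ for all $k$, by Lemma~\ref{lem_feas} with the initialization~\eqref{eq_init}. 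This means $\mb p\perp Q^{-1/2}\mb s_n$, while $\ker S=\operatorname{span}(Q^{-1/2}\mb s_n)$. Hence $\mb p$ lies in the range of $S$.

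I would remark that $S$ is similar to $Q^{1/2}LQ^{1/2}$ up to the sign conjugation by $B$; the positive spectra agree because $BQ^{1/2}=Q^{1/2}B$. Expanding $\mb p$ in the orthonormal eigenbasis of $S$ on its range gives~\eqref{eq_hb_mode} for each positive eigenvalue $\theta$.

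\textbf{Schur condition.} For the characteristic polynomial $\lambda^2-(1+\mu-\eta\theta)\lambda+\mu$, I would apply the Jury conditions for $\lambda^2+a_1\lambda+a_0$: $|a_0|<1$ and $|a_1|<1+a_0$. These give $|\mu|<1$ and $|1+\mu-\eta\theta|<1+\mu$, i.e.\ $0<\eta\theta<2(1+\mu)$. Since the middle condition is monotone in $\theta$, enforcing it at $\theta_{\max}$ covers all modes, with $\theta>0$ ensuring the lower bound. The statement's restriction $\mu\ge0$ is the standing choice of the method, not a consequence of the Jury test.

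\textbf{Optimal parameters.} This is the main obstacle. I would follow Polyak's argument. When $(1+\mu-\eta\theta)^2\le4\mu$, the roots are complex with modulus exactly $\sqrt\mu$. Choose $(\eta,\mu)$ so that the boundary cases hold with equality at both ends: $1+\mu-\eta\theta_{\min}=2\sqrt\mu$ and $1+\mu-\eta\theta_{\max}=-2\sqrt\mu$. Subtracting the two equations gives $\eta(\theta_{\max}-\theta_{\min})=4\sqrt\mu$, and adding them gives $2(1+\mu)=\eta(\theta_{\max}+\theta_{\min})$. Solving yields $\sqrt\mu=(\sqrt{\theta_{\max}}-\sqrt{\theta_{\min}})/(\sqrt{\theta_{\max}}+\sqrt{\theta_{\min}})$ and $\eta_{\rm HB}$ as in~\eqref{eq_hb_opt}. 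All intermediate $\theta$ then give complex roots of modulus $\sqrt{\mu_{\rm HB}}=r_{\rm HB}$.

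For minimax optimality, I would note that the spectral radius is at least $\sqrt\mu$ for any mode, since the product of the roots is $\mu$. I would then argue that decreasing $\mu$ below $\mu_{\rm HB}$ forces a real root at $\theta_{\min}$ or $\theta_{\max}$ of modulus exceeding $r_{\rm HB}$, by the standard two-endpoint argument. A fully rigorous minimax proof can be cited from Polyak.

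\textbf{Comparison with gradient descent.} For $\mu=0$ the factor is $\max|1-\eta\theta|$, minimized at $\eta=2/(\theta_{\min}+\theta_{\max})$, which gives $(\varkappa-1)/(\varkappa+1)$. Finally, $(\sqrt\varkappa-1)/(\sqrt\varkappa+1)<(\varkappa-1)/(\varkappa+1)$ for $\varkappa>1$, because $t\mapsto(t-1)/(t+1)$ is increasing and $\sqrt\varkappa<\varkappa$.
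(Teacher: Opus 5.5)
Your proposal is correct and follows essentially the same route as the paper: you reduce to decoupled modes of $Q^{1/2}LQ^{1/2}$ on the feasible subspace, apply the second-order Jury test, and invoke Polyak's endpoint argument for the minimax pair (cited, as the paper also does). The only difference is the coordinate change: you symmetrize with $\mb p=Q^{1/2}\mb e$, so that $S=B\,Q^{1/2}LQ^{1/2}B$ and feasibility places $\mb p$ in $\operatorname{range}(S)=(\ker S)^\perp$, whereas the paper uses $\mb y=B\mb x$ and the similarity of $LQ$ with $Q^{1/2}LQ^{1/2}$; your remark that Jury yields $|\mu|<1$, so that $\mu\geq0$ in the stated equivalence is a standing restriction rather than a consequence, is accurate.
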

\begin{proof}
		Let $\mb x(k)=\mb z(k)-\mb z^\star$ and $\mb y(k)=B\mb x(k)$. Equality feasibility gives $\mathbf1^\top\mb y(k)=0$, and the update becomes $\mb y(k+1)=(I-\eta LQ)\mb y(k)+\mu(\mb y(k)-\mb y(k-1))$. The nonzero spectrum of $LQ$ equals that of the symmetric matrix $Q^{1/2}LQ^{1/2}$, yielding~\eqref{eq_hb_mode}. The second-order Jury conditions give~\eqref{eq_hb_schur}; minimizing the maximum root modulus over $[\theta_{\min},\theta_{\max}]$ gives~\eqref{eq_hb_opt}-\eqref{eq_hb_rate} \cite{polyak1964some}.
\end{proof}

\begin{rem}
	{Proposition~\ref{prop_spectral_hb} provides a principled momentum choice for the quadratic baseline, and Section~\ref{sec_spectral_validation} validates its predicted modal recurrence on an actual fixed sparse network. At the minimax pair, the two extreme modes have repeated roots; hence $r_{\rm HB}$ is the asymptotic root-modulus factor, while a finite-time bound may include a polynomial prefactor. For nonlinear, delayed, or quantized dynamics, Theorem~\ref{thm_conv} gives a sufficient robust region rather than an optimal rate; the general $\mu^*(\eta,\bar\tau,\rho)$ remains open.}
\end{rem}

\subsection{Extension to Delayed Communication} \label{sec_delaycase}

Under Assumption~\ref{ass_delay}, we introduce the indicator function:
\begin{align} \label{eq_I}
	\mathcal{I}_{k,ij}(r) := \left\{ \begin{array}{ll}
		1, & \text{if}~  \tau_{ij}(k) = r,\\
		0, & \text{otherwise}.
	\end{array}\right.
\end{align}
The delayed dynamics is given by
\begin{align} \nonumber
	z_i(k+1) ={}& z_i(k) + \eta_\tau b_i \sum_{j \in \mc{N}_i}  \sum_{r=0}^{\bar\tau} A_{ij}  \Big(b_j q(\nabla_j(k-r)) \\ &- b_i q(\nabla_i(k-r))\Big) {\mathcal{I}_{k,ij}(r)} + \mu w_i(k), \label{eq_sol_delay1}\\
	w_i(k+1) ={}& z_i(k+1) - z_i(k), \label{eq_sol_delay2}
\end{align}
where $\eta_\tau > 0$ is the step size for the delayed case. Setting $\bar\tau = 0$ recovers~\eqref{eq_sol_nodelay}-\eqref{eq_w_nodelay} with $\eta_\tau = \eta$.
{For $k<0$, set $z(k)=z(0)$ and $w(k)=0$. At each update time $k$, exactly one indicator $\mathcal I_{k,ij}(r)$ is active per edge. This convention is used in the delay-mismatch bound below.}

Algorithm~\ref{alg_1} summarizes the procedure.
\begin{algorithm}
	\textbf{Input:}  $A$, $\mc{G}$, $\eta_\tau$, $\mu$, $\sigma$, $\{d_i\}_{i=1}^n$, $\{f_i(\cdot)\}_{i=1}^n$, $\overline{\tau}$\;
	\textbf{Initialization:} $k=0$, $z_i(0) = d_i$, $w_i(0) = 0$ for all $i$\; allocate a local gradient-history buffer of length $\bar\tau+1$\;
	\While{not converged}{
		\ForEach{energy node $i \in \mc{V}$ \textbf{in parallel}}{
			compute and store $q(\nabla_i(k))$ and transmit its index\;
			for each edge $\{i,j\}$ obtain the common realized delay $r=\tau_{ij}(k)=\tau_{ji}(k)$ and retrieve the local value $q(\nabla_i(k-r))$ paired with the received $q(\nabla_j(k-r))$\;
			update $z_i(k+1)$ by~\eqref{eq_sol_delay1} and $w_i(k+1)$ by~\eqref{eq_sol_delay2}\;
		}
		$k \leftarrow k+1$\;
	}
	\textbf{Return:} $\{z_i(k)\}_{i=1}^n$ and $\wh{F}(\mb{z}(k))$\;
	\caption{Heavy-ball momentum DERMS dynamics}
	\label{alg_1}
\end{algorithm}

\subsection{Computational, Communication, and Storage Complexity}

We briefly quantify the per-iteration arithmetic, communication, and memory requirements of Algorithm~\ref{alg_1}.
Since each agent maintains a scalar decision variable \(z_i\) and the local gradients are available in closed form, the local gradient and quantizer evaluations are constant-time operations.
The dominant per-iteration cost therefore comes from aggregating information over neighboring agents.

\begin{prop}[Per-iteration complexity of Algorithm~\ref{alg_1}]
	\label{prop_complexity}
	Consider Algorithm~\ref{alg_1} over $\mc G=(\mc V,\mc E)$ with neighbor set \(\mathcal N_i\) for agent \(i\), and let \(\bar\tau\) denote the maximum communication delay. Suppose a finite implementation retains $M_q$ positive logarithmic levels as in Remark~\ref{rem_bits}.
	
	For scalar local decision variables, one iteration of Algorithm~\ref{alg_1} at node \(i\) requires:
	\begin{enumerate}
		\item \(O(1)\) arithmetic operations to evaluate the local gradient and its quantized version;
		\item \(O(|\mathcal N_i|)\) arithmetic operations to aggregate neighbor contributions in the update;
		\item transmission and reception of one quantizer index per active neighbor, using $\lceil\log_2(2M_q+1)\rceil$ bits per index apart from protocol metadata;
		\item {$O(\bar\tau\,(|\mathcal N_i|+1))$ scalar memory} to store delayed neighbor values and the node's own gradient history.
	\end{enumerate}
	
	Hence, the local per-iteration arithmetic cost is \(O(|\mathcal N_i|)\), while the network-wide arithmetic cost is
	\[
	O\!\left(\sum_{i=1}^n |\mathcal N_i|\right)=O(|\mc E|).
	\]
	Likewise, the network-wide communication cost per iteration is
	{$2|\mc E|\lceil\log_2(2M_q+1)\rceil$ bits} apart from protocol metadata, and the total delay-buffer storage cost is $O(\bar\tau(|\mc E|+n))$.
	
	In particular, if the graph has bounded degree, i.e., \( |\mathcal N_i| \le \Delta_{\max}\) for all \(i\), then the total per-iteration arithmetic and communication costs scale linearly with \(n\). For dense graphs, the corresponding costs scale as \(O(n^2)\).
\end{prop}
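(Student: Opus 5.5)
The plan is to count operations directly from the update rule~\eqref{eq_sol_delay1}--\eqref{eq_sol_delay2} and from the loop body of Algorithm~\ref{alg_1}, node by node, and then sum over $\mc V$ using the handshake identity $\sum_i|\mathcal N_i|=2|\mc E|$.

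\textbf{Local gradient and quantizer.} Since $z_i$ is scalar and $\nabla_i$ is available in closed form (e.g.\ quadratic plus softplus derivative), evaluating $\nabla_i(k)$ takes $O(1)$ operations. Evaluating $q_\rho$ via~\eqref{eq_quan_log} takes one logarithm, one rounding, one exponential and one sign, which is also $O(1)$. This gives item~1.

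\textbf{Aggregation.} For each $j\in\mathcal N_i$ exactly one indicator $\mathcal I_{k,ij}(r)$ is active, by the convention stated after~\eqref{eq_sol_delay2}. The inner sum over $r$ therefore reduces to a single lookup of the buffered pair $\bigl(q(\nabla_i(k-r)),q(\nabla_j(k-r))\bigr)$ followed by a constant number of multiplications and additions. Summing over $\mathcal N_i$ costs $O(|\mathcal N_i|)$. Adding $\mu w_i(k)$ and computing $w_i(k+1)$ by~\eqref{eq_sol_delay2} are $O(1)$, which gives item~2.

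\textbf{Communication.} Each node transmits its quantizer index once per neighbor. With the finite alphabet of Remark~\ref{rem_bits} ($2M_q+1$ symbols), an index needs $\lceil\log_2(2M_q+1)\rceil$ bits, which gives item~3.

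\textbf{Storage.} Node $i$ keeps its own last $\bar\tau+1$ quantized gradients and, for each neighbor, the last $\bar\tau+1$ received values, since any delay in $\{0,\dots,\bar\tau\}$ may be realized. This is $(\bar\tau+1)(|\mathcal N_i|+1)=O(\bar\tau(|\mathcal N_i|+1))$ scalars, interpreting $O(\bar\tau)$ as $O(\bar\tau+1)$ when $\bar\tau=0$. This gives item~4.

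\textbf{Network totals.} Summing the per-node bounds gives arithmetic $\sum_i O(1+|\mathcal N_i|)=O(n+|\mc E|)$, which is $O(|\mc E|)$ because connectivity (Assumption~\ref{ass_conn}) implies $|\mc E|\ge n-1$. Communication totals $\sum_i|\mathcal N_i|$ indices, that is, $2|\mc E|\lceil\log_2(2M_q+1)\rceil$ bits. Storage totals $O(\bar\tau(2|\mc E|+n))=O(\bar\tau(|\mc E|+n))$.

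\textbf{Scaling.} Under bounded degree, $|\mc E|\le n\Delta_{\max}/2$, so all costs are linear in $n$. For dense graphs, $|\mc E|=O(n^2)$.

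\textbf{Main obstacle.} The only delicate point is item~4: one must justify that buffering the full window of $\bar\tau+1$ values per neighbor is sufficient and is what is charged. Sufficiency follows because Assumption~\ref{ass_delay}(i) bounds every realized delay by $\bar\tau$, and symmetry (ii) lets node $i$ pair its own stored value at the common index $k-r$. Everything else is routine bookkeeping.
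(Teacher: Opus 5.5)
Your proposal is correct and follows essentially the same argument as the paper: per-node counting of the gradient/quantizer evaluation, the neighbor aggregation, one index per edge direction, and a $(\bar\tau+1)$-length buffer, then summing with $\sum_i|\mathcal N_i|=2|\mc E|$. Two small refinements in your version are left implicit in the paper's proof: you use connectivity ($|\mc E|\ge n-1$) to absorb the $O(n)$ per-node constant terms into $O(|\mc E|)$, and you treat the $\bar\tau=0$ case separately.
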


\begin{proof}
	At each iteration, agent \(i\) performs one local gradient evaluation, one quantizer evaluation, and one update of its own primal/momentum variables; these are \(O(1)\) operations under the scalar-node model adopted in this paper.
	The remaining work is the neighbor aggregation appearing in the distributed update, which requires one contribution per active neighbor, yielding \(O(|\mathcal N_i|)\) arithmetic operations locally.
	
	For communication, each edge $\{i,j\}\in\mc E$ carries one quantizer index in each direction per iteration, giving the stated bit count from Remark~\ref{rem_bits}.
	
	For storage, bounded delays require each agent to retain delayed neighbor information and its own quantized-gradient history over at most \(\bar\tau\) past steps.
	Thus, node \(i\) stores at most $O(\bar\tau(|\mathcal N_i|+1))$ scalar values.
	Summing the local arithmetic and storage costs over all agents and using
	\[
	\sum_{i=1}^n |\mathcal N_i| = 2|\mc E|,
	\]
	gives the stated network-wide bounds.
	The bounded-degree and dense-graph scalings follow immediately.
\end{proof}

\begin{rem}
	\label{rem_complexity_vector}
	The above statement is given for the scalar-agent setting considered in this paper.
	If each agent instead controls a vector in \(\mathbb R^p\), then all arithmetic, communication, and storage bounds scale linearly in \(p\).
\end{rem}

\section{Analysis of Feasibility, Optimality, and Convergence} \label{sec_analysis}

The analysis proceeds in three layers. First, we study the distributed dynamics for the smooth penalized equality-constrained problem and establish anytime feasibility together with practical stability of the fixed-quantized algorithm. Second, we characterize the equilibria of the fixed-quantized dynamics and bound their suboptimality relative to the penalized optimizer. Third, in the special case $\rho=0$, the residual vanishes and exact convergence to the penalized optimizer is recovered. The approximation of the original box-constrained problem by the penalized problem is handled separately through Proposition~\ref{prop_penalty}.

\subsection{Optimality Condition}

\begin{lem}[Optimality condition] \label{lem_z*}
	\cite{Boyd-CVXBook} The optimal state $\mb z^\star$ of the constrained problem~\eqref{eq_dra2}-\eqref{eq_b2} satisfies $\nabla_z F(\mb z^\star) \in \mathrm{span}(\mb{s}_n)$. Equivalently, there exists $\psi^\star \in \mathbb{R}$ such that $b_i \nabla_i(z_i^\star) = \psi^\star$ for all $i$.
\end{lem}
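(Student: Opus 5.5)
The plan is to rewrite the coupling constraint~\eqref{eq_b2} as a single linear equality and then apply the Lagrange/KKT conditions for an equality-constrained strongly convex program. Since $\sum_{i\le n_1} d^x_i-\sum_{i> n_1} d^y_i = D$, constraint~\eqref{eq_b2} is equivalent to $\mb s_n^\top \mb z = D$. The feasible set $\mc Z_D=\{\mb z:\mb s_n^\top\mb z=D\}$ is therefore a nonempty affine hyperplane with normal vector $\mb s_n\neq 0$.

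\emph{Existence and uniqueness.} Under Assumption~\ref{ass_cvx}, $F$ is $C^2$ with $\nabla^2 F\succeq 2vI$. Hence $F$ is globally strongly convex and coercive. Its restriction to the closed affine set $\mc Z_D$ therefore attains a unique minimizer $\mb z^\star$.

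\emph{First-order condition.} Since the only constraint is affine, Slater/linear-constraint qualification holds trivially, and the KKT conditions are necessary and sufficient \cite{Boyd-CVXBook}. Concretely, for any direction $\mb d$ with $\mb s_n^\top\mb d=0$, the point $\mb z^\star+t\mb d$ is feasible for all $t$. Minimality gives $\frac{d}{dt}F(\mb z^\star+t\mb d)|_{t=0}=\nabla_zF(\mb z^\star)^\top\mb d=0$. Thus $\nabla_zF(\mb z^\star)\in(\mb s_n^\perp)^\perp=\mathrm{span}(\mb s_n)$, and there is a scalar $\psi^\star$ with $\nabla_zF(\mb z^\star)=\psi^\star\mb s_n$.

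\emph{Equivalence of the two forms.} The componentwise form of this condition is $\nabla_i(z_i^\star)=\psi^\star b_i$, because $\mb s_n=B\mb 1_n$ with $(\mb s_n)_i=b_i$. Multiplying by $b_i$ and using $b_i^2=1$ gives $b_i\nabla_i(z_i^\star)=\psi^\star$ for all $i$. Conversely, if $b_i\nabla_i(z_i^\star)=\psi^\star$ for all $i$, then $\nabla_zF(\mb z^\star)=\psi^\star B\mb 1_n=\psi^\star\mb s_n$, so the two statements are equivalent. For sufficiency, by convexity any feasible point satisfying this condition is a minimizer. Indeed, for feasible $\mb z$,
\[
F(\mb z)\ge F(\mb z^\star)+\psi^\star\mb s_n^\top(\mb z-\mb z^\star)=F(\mb z^\star).
\]

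There is no real obstacle. The only point needing care is the translation of~\eqref{eq_b2} into $\mb s_n^\top\mb z=D$ via the initialization identity, together with the sign bookkeeping $b_i^2=1$. The role of Assumption~\ref{ass_cvx} is only to guarantee existence and uniqueness of $\mb z^\star$.
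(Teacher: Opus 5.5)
Your proposal is correct and takes essentially the paper's approach. The paper gives no argument beyond citing the standard Lagrange/KKT condition \cite{Boyd-CVXBook} for a strongly convex cost under the single affine constraint $\mb s_n^\top\mb z=D$, which is exactly what you derive via directional derivatives along $\mb s_n^{\perp}$ and the identity $b_i^2=1$, with existence and uniqueness supplied by Assumption~\ref{ass_cvx}.
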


\subsection{Quantized Equilibrium Characterization}

\begin{lem}[Quantized equilibrium] \label{lem_quantized_equilibrium}
	Let Assumption~\ref{ass_conn} hold. If $(\mb{z}^\infty, \mb{w}^\infty)$ is an equilibrium of the dynamics~\eqref{eq_sol_delay1}-\eqref{eq_sol_delay2}, then:
	\begin{align}
		\mb{w}^\infty &= \mb{0}, \label{eq_w_eq_zero} \\
		q\!\left(B\nabla_z F(\mb{z}^\infty)\right) &= \alpha\,\mb{1}_n \label{eq_quantized_eq_char}
	\end{align}
	for some $\alpha \in \mathbb{R}$.
\end{lem}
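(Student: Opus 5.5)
The plan is to read off the two conditions directly from the fixed-point equations of~\eqref{eq_sol_delay1}-\eqref{eq_sol_delay2}. I take an equilibrium to mean a solution with $\mb z(k)\equiv\mb z^\infty$ and $\mb w(k)\equiv\mb w^\infty$ for all $k$, including the history window $k-\bar\tau,\dots,k$. I will show that this must hold for every admissible delay realization allowed by Assumption~\ref{ass_delay}, and that only connectivity (Assumption~\ref{ass_conn}) is needed.

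\emph{Step 1: momentum vanishes.} Substitute the constant trajectory into~\eqref{eq_sol_delay2}. This gives $w_i^\infty=z_i^\infty-z_i^\infty=0$ for every $i$, which is~\eqref{eq_w_eq_zero}.

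\emph{Step 2: removing the delays.} Along a constant trajectory, every delayed gradient equals the current one: $\nabla_j(k-r)=\nabla_j(z_j^\infty)$ for all $r\in\{0,\dots,\bar\tau\}$. By the convention stated after~\eqref{eq_sol_delay2}, exactly one indicator $\mathcal I_{k,ij}(r)$ is active per edge, so $\sum_{r}\mathcal I_{k,ij}(r)=1$. The double sum in~\eqref{eq_sol_delay1} therefore collapses to the delay-free sum. Using $\mb w^\infty=\mb 0$ and $\eta_\tau>0$, the fixed-point equation becomes
\[
b_i\sum_{j\in\mc N_i}A_{ij}\bigl(b_jq(\nabla_j(z_j^\infty))-b_iq(\nabla_i(z_i^\infty))\bigr)=0\quad\forall i.
\]
In vector form this reads $B L B\,q(\nabla_zF(\mb z^\infty))=\mb 0$, as in~\eqref{eq_L}.

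\emph{Step 3: kernel of the Laplacian.} Since $B$ is diagonal with entries $\pm1$, it is invertible, so $L\mb g^\infty=\mb 0$ with $\mb g^\infty:=Bq(\nabla_zF(\mb z^\infty))$. By Assumption~\ref{ass_conn}, $\ker L=\mathrm{span}(\mb 1_n)$, hence $\mb g^\infty=\alpha\mb 1_n$ for some $\alpha\in\mathbb R$. Finally, $q_\rho$ is odd for $\rho>0$ and $q_0$ is the identity, so $b_iq(s)=q(b_is)$ for $b_i=\pm1$. Thus $\mb g^\infty=q(B\nabla_zF(\mb z^\infty))$, which gives~\eqref{eq_quantized_eq_char}.

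\emph{Main obstacle.} The only delicate point is Step 2: one must argue that the time-varying delays and the indicator bookkeeping disappear. This relies on the equilibrium being constant over the whole history buffer, together with the pre-initial convention $z(k)=z(0)$ for $k<0$ and the fact that exactly one delay index is active per edge.
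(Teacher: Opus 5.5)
Your proposal is correct and follows essentially the same route as the paper's proof. You read $\mathbf{w}^\infty=\mathbf{0}$ off the momentum update, collapse the delayed sum because all history copies coincide at equilibrium, cancel the invertible $B$, and invoke $\ker L=\mathrm{span}(\mathbf{1}_n)$ under connectivity, while your explicit use of the oddness of $q$ to identify $Bq(\nabla_zF(\mathbf{z}^\infty))$ with $q(B\nabla_zF(\mathbf{z}^\infty))$ makes precise a step the paper leaves implicit.
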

\begin{proof}
	At equilibrium, \eqref{eq_sol_delay2} gives $\mb{w}^\infty = \mb{z}^\infty - \mb{z}^\infty = \mb{0}$. Substituting into~\eqref{eq_sol_delay1}, using that all delayed copies coincide at equilibrium, yields $BL\,q(B\nabla_z F(\mb{z}^\infty)) = 0$. Because $B$ is invertible, $Lq(B\nabla_z F(\mb{z}^\infty))=0$. Connectivity gives $\ker L = \mathrm{span}(\mb{1}_n)$, and therefore $q(B\nabla_z F(\mb{z}^\infty))=\alpha\mb{1}_n$ for some $\alpha\in\mathbb R$.
\end{proof}

\begin{rem} \label{rem_quantized_eq}
	Lemma~\ref{lem_quantized_equilibrium} shows that \eqref{eq_quantized_eq_char} does \textit{not} imply $\nabla_z F(\mb{z}^\infty) \in \mathrm{span}(\mb{s}_n)$ and, therefore, does not imply exact optimality. Exact optimality is recovered only in the unquantized case $\rho = 0$.
\end{rem}

\subsection{Quantization-Error Bounds}

We retain the standard Laplacian bound only for the unquantized signal and handle quantization errors explicitly.

\begin{lem} \label{lem_xLy}
	Given a symmetric adjacency matrix $A$ with Laplacian $L$, define $\overline{\mb{z}} := \mb{z} - \frac{\mb{1}_n^\top \mb{z}}{n} \mb{1}_n$. Then
	\begin{align}
		\lambda_2 \|\overline{\mb{z}}\|_2^2 \leq \mb{z}^\top L \mb{z} = \overline{\mb{z}}^\top L \overline{\mb{z}} \leq \lambda_n \|\overline{\mb{z}}\|_2^2. \label{eq_laplace}
	\end{align}
\end{lem}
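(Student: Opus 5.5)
We prove Lemma~\ref{lem_xLy} by expressing the Laplacian quadratic form as an edge sum and then using the spectral decomposition of $L$ on the orthogonal complement of $\mb 1_n$.

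The first step is the identity
\[
\mb z^\top L\mb z=\tfrac12\sum_{i,j}A_{ij}(z_i-z_j)^2 .
\]
It follows from $L=\mc D-A$ and the symmetry of $A$, by expanding $\sum_i \mc D_{ii}z_i^2-\sum_{i,j}A_{ij}z_iz_j$. Because the right-hand side depends only on the differences $z_i-z_j$, it does not change when $\mb z$ is replaced by $\mb z+c\mb 1_n$. Taking $c=-\mb 1_n^\top\mb z/n$ gives $\mb z^\top L\mb z=\overline{\mb z}^\top L\overline{\mb z}$. Equivalently, this follows algebraically from $L\mb 1_n=\mb 0$ and $\mb 1_n^\top L=\mb 0^\top$.

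For the bounds, note that $L$ is real symmetric and, by the edge-sum form, positive semidefinite. Let $\{\mb v_1,\dots,\mb v_n\}$ be an orthonormal eigenbasis with eigenvalues $0=\lambda_1\le\lambda_2\le\dots\le\lambda_n$, and take $\mb v_1=\mb 1_n/\sqrt n$. By construction, $\overline{\mb z}$ is orthogonal to $\mb 1_n$, so it expands as $\overline{\mb z}=\sum_{\ell\ge2}c_\ell\mb v_\ell$. Then
\[
\overline{\mb z}^\top L\overline{\mb z}=\sum_{\ell\ge2}\lambda_\ell c_\ell^2
\quad\text{and}\quad
\|\overline{\mb z}\|^2=\sum_{\ell\ge2}c_\ell^2 .
\]
Bounding each $\lambda_\ell$ between $\lambda_2$ and $\lambda_n$ yields~\eqref{eq_laplace}.

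The one delicate point is interpreting $\lambda_2$. The inequality is meant with $\lambda_2$ equal to the second-smallest eigenvalue of $L$, counted with multiplicity. Under Assumption~\ref{ass_conn}, $\ker L=\mathrm{span}(\mb 1_n)$, so this eigenvalue coincides with the smallest positive eigenvalue as defined in the paper. Without connectivity, the lower bound still holds but is trivial, since then $\lambda_2=0$. The upper bound with $\lambda_n$ holds for any graph.
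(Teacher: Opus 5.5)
Your argument is correct and is the standard edge-sum plus Rayleigh-quotient (Courant--Fischer) proof; the paper gives no argument of its own and simply cites Olfati-Saber, Fax, and Murray for it. Your remark on $\lambda_2$ is a useful clarification: with the paper's definition of $\lambda_2$ as the smallest \emph{positive} eigenvalue, the lower bound needs connectivity, since on a disconnected graph a vector that is constant on each component but orthogonal to $\mathbf{1}_n$ violates it, and the lemma statement does not itself invoke Assumption~\ref{ass_conn}.
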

\begin{proof}
	Standard; see \cite{olfatisaberfaxmurray07}.
\end{proof}

\begin{lem}[Explicit feasible-sublevel gradient envelope] \label{lem_sublevel_envelope}
		Let $\mb z^\star$ minimize the smooth composite $F$ over $\mc Z_D$, let $B\nabla F(\mb z^\star)=\psi^\star\mathbf1$, and, for any chosen $R>0$, define
		\begin{align}
			\Omega_R&:=\{\mb z\in\mc Z_D:F(\mb z)-F(\mb z^\star)\leq R\},\label{eq_Omega_R}\\
			G_R&:=|\psi^\star|+2u\sqrt{R/v}. \label{eq_G_R}
		\end{align}
		Then $\Omega_R$ is compact and every $\mb z\in\Omega_R$ satisfies $\|B\nabla F(\mb z)\|_\infty\leq G_R$.
\end{lem}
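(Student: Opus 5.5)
The plan is to combine the quadratic growth bound from the Remark after Assumption~\ref{ass_cvx} with the global Lipschitz bound on the gradient from~\eqref{eq_vu}.

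\textbf{Step 1 (quadratic growth on $\mc Z_D$).} Take $\mb z\in\Omega_R$. Since $\mb z$ and $\mb z^\star$ both lie in the affine space $\mc Z_D$, the difference $\mb z-\mb z^\star$ is orthogonal to $\mb s_n$. By Lemma~\ref{lem_z*}, $\nabla F(\mb z^\star)=B\,\psi^\star\mathbf1=\psi^\star\mb s_n$. Hence the first-order term $\nabla F(\mb z^\star)^\top(\mb z-\mb z^\star)$ vanishes. A second-order Taylor expansion with $\nabla^2F\succeq 2vI$ then gives
\begin{align*}
	F(\mb z)-F(\mb z^\star)\geq v\|\mb z-\mb z^\star\|^2 .
\end{align*}
So every $\mb z\in\Omega_R$ satisfies $\|\mb z-\mb z^\star\|\leq\sqrt{R/v}$.

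\textbf{Step 2 (compactness).} This shows that $\Omega_R$ is bounded. It is also closed, since it is the intersection of the closed affine set $\mc Z_D$ with a sublevel set of the continuous function $F$. Hence $\Omega_R$ is compact.

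\textbf{Step 3 (gradient envelope).} The cost is separable and $0<2v\le f_i''\le 2u$, so each scalar derivative is $2u$-Lipschitz:
\begin{align*}
	|\nabla_i(z_i)-\nabla_i(z_i^\star)|\leq 2u|z_i-z_i^\star| .
\end{align*}
Because $|b_i|=1$ and $b_i\nabla_i(z_i^\star)=\psi^\star$, we get
\begin{align*}
	|b_i\nabla_i(z_i)|\leq|\psi^\star|+2u|z_i-z_i^\star| .
\end{align*}
Using $|z_i-z_i^\star|\le\|\mb z-\mb z^\star\|_\infty\le\|\mb z-\mb z^\star\|_2\le\sqrt{R/v}$ and taking the maximum over $i$ gives
\begin{align*}
	\|B\nabla F(\mb z)\|_\infty\le G_R .
\end{align*}

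\textbf{Main obstacle.} The only delicate point is Step~1. One must justify that the linear term in the expansion vanishes. This relies on the optimality condition of Lemma~\ref{lem_z*} holding for the smooth composite penalized $F$, which is guaranteed by Assumption~\ref{ass_cvx}. It also relies on the affine constraint direction being exactly $\mb s_n$ in the $z$ coordinates. Once this is in place, Steps~2 and~3 are routine.
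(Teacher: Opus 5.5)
Your proposal is correct and follows essentially the same route as the paper's proof. Quadratic growth $F(\mb z)-F(\mb z^\star)\geq v\|\mb z-\mb z^\star\|^2$ on $\mc Z_D$ gives boundedness and closedness, and the $2u$-Lipschitz scalar gradients together with $b_i\nabla_i(z_i^\star)=\psi^\star$ give the envelope $G_R$; you just spell out the vanishing linear term and closedness that the paper leaves implicit, and the identity $B\nabla F(\mb z^\star)=\psi^\star\mathbf1$ is already a hypothesis of the lemma, so you need not appeal to Lemma~\ref{lem_z*}.
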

\begin{proof}
		Strong convexity on the feasible affine space gives $F(\mb z)-F(\mb z^\star)\geq v\|\mb z-\mb z^\star\|^2$, so $\Omega_R$ is bounded and closed. Global $2u$-Lipschitz continuity of each scalar gradient and the KKT identity give
		$|b_i\nabla_i-\psi^\star|\leq2u|z_i-z_i^\star|\leq2u\sqrt{R/v}$.
\end{proof}

\begin{lem}[Componentwise quantization error] \label{lem_qerror}
	Let $\boldsymbol{\phi} \in \mathbb{R}^n$, $\mb{g} := q(\boldsymbol{\phi})$, $\mb{e} := \mb{g} - \boldsymbol{\phi}$. Under~\eqref{eq_quan_relerr}, $|e_i| \leq \delta_\rho |\phi_i|$ for all $i$, where $\delta_\rho := e^{\rho/2}-1$. If $\|\boldsymbol{\phi}\|_\infty \leq {G_R}$, then
	\begin{align}
		\|\mb{e}\|^2 &\leq n\delta_\rho^2 {G_R^2}, \label{eq_qerror_norm} \\
		\mb{e}^\top L\mb{e} &\leq \lambda_n n\delta_\rho^2 {G_R^2}. \label{eq_qerror_L}
	\end{align}
\end{lem}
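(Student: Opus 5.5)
The argument is entrywise and needs only the relative-error bound~\eqref{eq_quan_relerr} together with the Laplacian upper bound of Lemma~\ref{lem_xLy}. At no point do we need continuity or Lipschitz properties of $q_\rho$.

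\emph{Step 1 (componentwise bound).} Apply~\eqref{eq_quan_relerr} to each coordinate with $s=\phi_i$. This gives $|e_i|=|q(\phi_i)-\phi_i|\leq\delta_\rho|\phi_i|$. The case $\phi_i=0$ is covered by the convention $q_\rho(0)=0$, and the case $\rho=0$ is trivial since then $\delta_0=0$ and $\mb e=\mb 0$. Note that~\eqref{eq_quan_relerr} follows from~\eqref{eq_quan_sect}: for $s\neq0$ we have $|q(s)/s-1|\leq\max\{e^{\rho/2}-1,\,1-e^{-\rho/2}\}=e^{\rho/2}-1$, because $1-e^{-x}\leq e^{x}-1$ for $x\geq0$. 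I will cite this and not re-derive it.

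\emph{Step 2 (Euclidean norm).} Square and sum the bound from Step 1:
\[
\|\mb e\|^2=\sum_{i=1}^n e_i^2\leq\delta_\rho^2\sum_{i=1}^n\phi_i^2\leq\delta_\rho^2\, n\,\|\boldsymbol\phi\|_\infty^2\leq n\delta_\rho^2G_R^2 .
\]
This is~\eqref{eq_qerror_norm}.

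\emph{Step 3 (Laplacian form).} By Lemma~\ref{lem_xLy}, $\mb e^\top L\mb e=\overline{\mb e}^\top L\overline{\mb e}\leq\lambda_n\|\overline{\mb e}\|^2$. Since $\overline{\mb e}$ is the orthogonal projection of $\mb e$ onto $\mathbf 1_n^\perp$, we have $\|\overline{\mb e}\|\leq\|\mb e\|$. Equivalently, $L\preceq\lambda_nI$ because $L$ is symmetric positive semidefinite with largest eigenvalue $\lambda_n$. Combining this with Step 2 gives~\eqref{eq_qerror_L}.

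There is no real obstacle here. The only point needing care is that the constant in Step 1 is the exact $\delta_\rho=e^{\rho/2}-1$, which dominates $1-e^{-\rho/2}$ on the lower side of the sector. The hypothesis $\|\boldsymbol\phi\|_\infty\leq G_R$ is exactly what Lemma~\ref{lem_sublevel_envelope} supplies when $\boldsymbol\phi=B\nabla F(\mb z)$ with $\mb z\in\Omega_R$. Because $q$ is odd, $q(B\nabla F)=Bq(\nabla F)$, so the lemma applies to $\mb g(k)$ as defined in the compact form.
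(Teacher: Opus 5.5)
Your proposal is correct and follows the paper's proof essentially verbatim. Both apply the relative-error bound componentwise, with $q(0)=0$ covering $\phi_i=0$, then square and sum using $\|\boldsymbol{\phi}\|_\infty\leq G_R$, and finally use $\mb{e}^\top L\mb{e}\leq\lambda_n\|\mb{e}\|^2$.
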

\begin{proof}
	From~\eqref{eq_quan_relerr}, $|e_i| \leq \delta_\rho |\phi_i|$ for every $i$, including the case $\phi_i=0$ since $q(0)=0$. Therefore $|e_i| \leq \delta_\rho {G_R}$, which yields~\eqref{eq_qerror_norm} after squaring and summing. The bound~\eqref{eq_qerror_L} follows from $\mb{e}^\top L\mb{e} \leq \lambda_n\|\mb{e}\|^2$.
\end{proof}

\begin{lem}[Lower bound for the mixed Laplacian term] \label{lem_phiLg}
	Let $\boldsymbol{\phi} \in \mathbb{R}^n$, $\mb{g} = q(\boldsymbol{\phi})$, $\Phi := \boldsymbol{\phi}^\top L\boldsymbol{\phi}$. {For $z\in\Omega_R$} and every $\xi > 0$, we have
	\begin{align} \label{eq_phiLg_lower}
		\boldsymbol{\phi}^\top L\mb{g} \geq \left(1 - \frac{\xi}{2}\right)\Phi - \frac{\lambda_n n}{2\xi}\delta_\rho^2 {G_R^2}.
	\end{align}
\end{lem}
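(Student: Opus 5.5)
Write $\mb g=\boldsymbol\phi+\mb e$ with $\mb e:=q(\boldsymbol\phi)-\boldsymbol\phi$. The plan is to expand the mixed term and control the cross term by Young's inequality in the Laplacian seminorm. This uses only the additive error $\mb e$ and never a Lipschitz property of $q$, as in Remark~\ref{rem_sector}. Concretely,
\begin{align*}
\boldsymbol\phi^\top L\mb g=\boldsymbol\phi^\top L\boldsymbol\phi+\boldsymbol\phi^\top L\mb e=\Phi+\boldsymbol\phi^\top L\mb e .
\end{align*}
It therefore suffices to show the lower bound $\boldsymbol\phi^\top L\mb e\geq-\frac{\xi}{2}\Phi-\frac{\lambda_n n}{2\xi}\delta_\rho^2G_R^2$.

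Since $L=L^\top\succeq0$, the bilinear form $(\mb a,\mb b)\mapsto\mb a^\top L\mb b$ is a positive semidefinite inner product. Cauchy--Schwarz then gives $|\boldsymbol\phi^\top L\mb e|\leq\sqrt{\Phi}\sqrt{\mb e^\top L\mb e}$; equivalently, apply it to $L^{1/2}\boldsymbol\phi$ and $L^{1/2}\mb e$. Young's inequality $ab\leq\frac{\xi}{2}a^2+\frac{1}{2\xi}b^2$, valid for any $\xi>0$, yields
\begin{align*}
\boldsymbol\phi^\top L\mb e\geq-\frac{\xi}{2}\Phi-\frac{1}{2\xi}\mb e^\top L\mb e .
\end{align*}

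It remains to bound $\mb e^\top L\mb e$. Because $\mb z\in\Omega_R$, Lemma~\ref{lem_sublevel_envelope} gives $\|\boldsymbol\phi\|_\infty=\|B\nabla F(\mb z)\|_\infty\leq G_R$. Here $\boldsymbol\phi$ is the signed gradient at $\mb z$, so this hypothesis is what links $\boldsymbol\phi$ to the sublevel set. Lemma~\ref{lem_qerror}, specifically \eqref{eq_qerror_L}, then gives $\mb e^\top L\mb e\leq\lambda_n n\delta_\rho^2G_R^2$. Substituting this bound into the previous display and adding $\Phi$ gives \eqref{eq_phiLg_lower}.

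One consistency point needs checking. In the paper $\mb g=Bq(\nabla F)$, whereas here $\mb g=q(\boldsymbol\phi)$ with $\boldsymbol\phi=B\nabla F$. Oddness of $q$ makes these equal, so the componentwise bound $|e_i|\leq\delta_\rho|\phi_i|$ applies in either reading. I expect no real obstacle beyond this identification and the correct use of the semidefinite Cauchy--Schwarz step; in particular, the argument does not need $\Phi>0$.
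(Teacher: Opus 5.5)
Your proposal is correct and follows the paper's proof essentially verbatim: the same splitting $\mb g=\boldsymbol\phi+\mb e$, Cauchy--Schwarz in the $L$-seminorm followed by Young's inequality with parameter $\xi$, and then the bound \eqref{eq_qerror_L} from Lemma~\ref{lem_qerror}. You also make explicit two points the paper leaves implicit. First, $\|\boldsymbol\phi\|_\infty\leq G_R$ comes from Lemma~\ref{lem_sublevel_envelope} via $\mb z\in\Omega_R$. Second, oddness of $q$ makes $q(B\nabla F)$ and $Bq(\nabla F)$ coincide.
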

\begin{proof}
	Write $\mb{g} = \boldsymbol{\phi} + \mb{e}$. Then, it readily follows that $\boldsymbol{\phi}^\top L\mb{g} = \Phi + \boldsymbol{\phi}^\top L\mb{e}$. By Cauchy\nobreakdash-Schwarz and Young, we have
	\[
	|\boldsymbol{\phi}^\top L\mb{e}| \leq \sqrt{\Phi}\sqrt{\mb{e}^\top L\mb{e}} \leq \frac{\xi}{2}\Phi + \frac{1}{2\xi}\mb{e}^\top L\mb{e}.
	\]
	Applying Lemma~\ref{lem_qerror} yields~\eqref{eq_phiLg_lower}.
\end{proof}

\begin{lem}[Quantized Laplacian-energy bound] \label{lem_gLg}
	Let $\boldsymbol{\phi} \in \mathbb{R}^n$, $\mb{g} = q(\boldsymbol{\phi})$. {For $z\in\Omega_R$}, the following inequalities hold:
	\begin{align}
		\mb{g}^\top L\mb{g} &\leq 2\,\boldsymbol{\phi}^\top L\boldsymbol{\phi} + 4d_{\max} n\delta_\rho^2 {G_R^2}, \label{eq_gLg} \\
		\mb{g}^\top L^2\mb{g} &\leq 2\lambda_n\,\boldsymbol{\phi}^\top L\boldsymbol{\phi} + 4\lambda_n d_{\max} n\delta_\rho^2 {G_R^2}. \label{eq_gL2g}
	\end{align}
\end{lem}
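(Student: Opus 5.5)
Write $\mb g=\boldsymbol\phi+\mb e$ with $\mb e=q(\boldsymbol\phi)-\boldsymbol\phi$, as in Lemma~\ref{lem_phiLg}. For $z\in\Omega_R$, Lemma~\ref{lem_sublevel_envelope} gives $\|\boldsymbol\phi\|_\infty\le G_R$. Lemma~\ref{lem_qerror} then gives $|e_i|\le\delta_\rho G_R$ componentwise and $\|\mb e\|^2\le n\delta_\rho^2G_R^2$. The plan is to split the quadratic form by the parallelogram-type inequality for a positive semidefinite matrix, and then bound the error part with a degree-based (Gershgorin) estimate rather than $\lambda_n$. This explains the constant $d_{\max}$ in \eqref{eq_gLg}.

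\emph{Step 1 (splitting).} Since $L\succeq0$, the map $\mb x\mapsto\sqrt{\mb x^\top L\mb x}$ is a seminorm. Therefore
\[
\mb g^\top L\mb g=(\boldsymbol\phi+\mb e)^\top L(\boldsymbol\phi+\mb e)\le 2\,\boldsymbol\phi^\top L\boldsymbol\phi+2\,\mb e^\top L\mb e,
\]
which is the same as $2\boldsymbol\phi^\top L\boldsymbol\phi+2\mb e^\top L\mb e-(\boldsymbol\phi-\mb e)^\top L(\boldsymbol\phi-\mb e)$ with the last term dropped.

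\emph{Step 2 (error term).} Use the edge form
\[
\mb e^\top L\mb e=\tfrac12\sum_{i,j}A_{ij}(e_i-e_j)^2\le\sum_{i,j}A_{ij}(e_i^2+e_j^2)=2\sum_i\Big(\sum_jA_{ij}\Big)e_i^2\le 2d_{\max}\|\mb e\|^2 .
\]
Equivalently, this is $\lambda_n\le 2d_{\max}$ by Gershgorin. Combining with Lemma~\ref{lem_qerror} gives $2\,\mb e^\top L\mb e\le 4d_{\max}n\delta_\rho^2G_R^2$, which is \eqref{eq_gLg}.

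\emph{Step 3 (squared Laplacian).} Diagonalize $L$, whose eigenvalues all lie in $[0,\lambda_n]$. This gives the operator inequality $L^2\preceq\lambda_nL$, so $\mb g^\top L^2\mb g\le\lambda_n\,\mb g^\top L\mb g$. Multiplying \eqref{eq_gLg} by $\lambda_n$ yields \eqref{eq_gL2g} directly.

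None of the steps is a real obstacle. The only point needing care is Step 2. We must use the degree bound $2d_{\max}$ rather than $\lambda_n$ to match the stated constant, and we must check that the bound $|e_i|\le\delta_\rho|\phi_i|$ holds at $\phi_i=0$, which follows from $q(0)=0$.
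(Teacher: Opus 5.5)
Your proof is correct and is essentially the paper's argument. The paper applies $(a+b)^2\le 2a^2+2b^2$ and $(e_i-e_j)^2\le 2e_i^2+2e_j^2$ edge by edge to obtain $\mb g^\top L\mb g\le 2\boldsymbol\phi^\top L\boldsymbol\phi+4\sum_i d_ie_i^2$, which is your Steps 1--2 combined, and then uses $\mb g^\top L^2\mb g\le\lambda_n\mb g^\top L\mb g$ exactly as in your Step 3. One small correction to your closing remark: you need not avoid $\lambda_n$ in Step 2. Since $\lambda_n\le 2d_{\max}$, the $\lambda_n$-based bound $2\lambda_n n\delta_\rho^2G_R^2$ is sharper and still implies \eqref{eq_gLg}.
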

\begin{proof}
	Let $\mb{e} := \mb{g} - \boldsymbol{\phi}$. Using $(a+b)^2 \leq 2a^2 + 2b^2$ and $(e_i - e_j)^2 \leq 2e_i^2 + 2e_j^2$, we obtain
	\[
	\mb{g}^\top L\mb{g} \leq 2\boldsymbol{\phi}^\top L\boldsymbol{\phi} + 4\sum_{i=1}^n d_i e_i^2.
	\]
	Lemma~\ref{lem_qerror} and $d_i \leq d_{\max}$ give~\eqref{eq_gLg}. Since $\mb{g}^\top L^2\mb{g} \leq \lambda_n\mb{g}^\top L\mb{g}$, \eqref{eq_gL2g} follows.
\end{proof}

\subsection{Descent Lemma}

\begin{lem}[Descent bound] \label{lem_strict}
	\cite{Boyd-CVXBook} Under Assumption~\ref{ass_cvx}, let $\delta(k) := \mb{z}(k+1) - \mb{z}(k)$. Then, we have
	\begin{align}
		F(\mb{z}(k+1)) \leq F(\mb{z}(k)) + \nabla_z F(\mb{z}(k))^\top \delta(k) + u\,\delta(k)^\top \delta(k). \label{eq_taylor_2}
	\end{align}
\end{lem}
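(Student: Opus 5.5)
We prove \eqref{eq_taylor_2} by the standard second-order Taylor argument. The only ingredient is the upper Hessian bound in Assumption~\ref{ass_cvx}; strong convexity and the structure of $\delta(k)$ are not used. Since $F(\mb z)=\sum_i f_i(z_i)$ is separable, the Hessian $\nabla_z^2F(\mb z)=\mathrm{diag}(\partial^2_{z_i}f_i(z_i))$ is diagonal. Each diagonal entry lies in $[2v,2u]$ by~\eqref{eq_vu}, so $\nabla^2_zF(\mb z)\preceq 2uI$ for every $\mb z\in\mathbb R^n$.

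Fix $k$ and write $\mb z=\mb z(k)$ and $\delta=\delta(k)$. Define the scalar function $\varphi(t):=F(\mb z+t\delta)$ on $[0,1]$. It is $C^2$ because each $f_i$ is, with $\varphi'(t)=\nabla_zF(\mb z+t\delta)^\top\delta$ and $\varphi''(t)=\delta^\top\nabla^2_zF(\mb z+t\delta)\delta\leq 2u\,\delta^\top\delta$. Taylor's theorem with integral remainder gives
\[
\varphi(1)=\varphi(0)+\varphi'(0)+\int_0^1(1-t)\varphi''(t)\,dt .
\]
Bounding the integrand by $(1-t)\,2u\,\delta^\top\delta$ and using $\int_0^1(1-t)\,dt=\tfrac12$ bounds the remainder by $u\,\delta^\top\delta$. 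Substituting $\varphi(1)=F(\mb z(k+1))$ and $\varphi(0)=F(\mb z(k))$ yields~\eqref{eq_taylor_2}.

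The bound holds for any increment, so it covers the delayed, quantized, and momentum dynamics at once. The one point that needs care is the constant, not the argument itself. Under the paper's Hessian convention the gradient is $2u$-Lipschitz, which is why the quadratic coefficient is $u$ rather than $u/2$. Per Remark~\ref{rem_penalty}, $u$ must be the composite penalized constant $\wh u$. The argument also needs the Hessian bound globally, because $\mb z+t\delta$ need not stay in the box or in $\Omega_R$. Assumption~\ref{ass_cvx} provides exactly this global bound.
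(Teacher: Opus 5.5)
Your proof is correct. It is the standard second-order Taylor (descent-lemma) argument with the global bound $\nabla_z^2F\preceq 2uI$, which is exactly what the paper invokes: it gives no proof of its own and simply cites Boyd and Vandenberghe for this result, i.e.\ $F(\mb y)\leq F(\mb x)+\nabla F(\mb x)^\top(\mb y-\mb x)+\tfrac{M}{2}\|\mb y-\mb x\|^2$ with $M=2u$.
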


\subsection{Anytime Feasibility}

\begin{lem}[Anytime feasibility] \label{lem_feas}
	Under Algorithm~\ref{alg_1} with initialization~\eqref{eq_init} and Assumption~\ref{ass_delay} (symmetric delays), the coupling constraint $\sum_{i=1}^n b_i z_i(k) = D$ holds for all $k \geq 0$.
\end{lem}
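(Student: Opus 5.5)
The plan is to track the scalar aggregate $S(k):=\sum_{i=1}^n b_i z_i(k)=\mb s_n^\top \mb z(k)$ and show it never changes. The initialization~\eqref{eq_init} gives $S(0)=\sum_{i\le n_1}d^x_i-\sum_{i>n_1}d^y_i=D$. The convention $z(k)=z(0)$ for $k<0$ gives $S(k)=D$ for $k\le 0$. The convention $w(k)=0$ for $k<0$ together with $w_i(0)=0$ gives the momentum identity $\sum_i b_i w_i(0)=0$.

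The core step is to multiply~\eqref{eq_sol_delay1} by $b_i$ and sum over $i$. Since $b_i^2=1$, the consensus term contributes
\[
\eta_\tau\sum_{i=1}^n\sum_{j\in\mc N_i}\sum_{r=0}^{\bar\tau}A_{ij}\Big(b_jq(\nabla_j(k-r))-b_iq(\nabla_i(k-r))\Big)\mathcal I_{k,ij}(r).
\]
I would regroup this sum over unordered edges $\{i,j\}$. The ordered pair $(i,j)$ contributes $A_{ij}(b_jq_j(k-\tau_{ij}(k))-b_iq_i(k-\tau_{ij}(k)))$. The pair $(j,i)$ contributes $A_{ji}(b_iq_i(k-\tau_{ji}(k))-b_jq_j(k-\tau_{ji}(k)))$. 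By $A=A^\top$ and Assumption~\ref{ass_delay}(ii), $\tau_{ij}(k)=\tau_{ji}(k)$, so the two endpoints use the same delay index $r$. Because exactly one indicator is active per edge, the two contributions are exact negatives and cancel. This step is the only place symmetry of the delays is needed, and it is the main point to handle carefully. In particular, the quantized values are the very same stored numbers at both endpoints, so no property of $q$ beyond being a deterministic function of the transmitted gradient is needed. The paired buffer in Algorithm~\ref{alg_1} is exactly what guarantees that node $i$ uses its own $q(\nabla_i(k-r))$ with the same $r$ as node $j$.

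This yields $S(k+1)=S(k)+\mu\sum_i b_iw_i(k)$. By~\eqref{eq_sol_delay2}, $\sum_i b_i w_i(k)=S(k)-S(k-1)$ for $k\ge1$, and it equals $0$ at $k=0$. Setting $\Delta(k):=S(k)-S(k-1)$, we get the recursion $\Delta(k+1)=\mu\Delta(k)$ with $\Delta(1)=\mu\cdot 0=0$. Induction then gives $\Delta(k)=0$ for all $k\ge1$, and hence $S(k)=D$ for all $k\ge0$. Equivalently, by strong induction on $k$, the combined statement ``$S(k)=D$ and $\sum_ib_iw_i(k)=0$'' propagates from $k$ to $k+1$.

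I would close by noting that the argument never uses connectivity, the staticity of the graph, or the quantizer structure. Only pairwise symmetry of the weights and of the delays on each active edge at each $k$ is used, which is consistent with the paper's remark that feasibility extends to paired time-varying undirected edges.
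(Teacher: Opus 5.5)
Your proposal is correct and follows the paper's proof: weight \eqref{eq_sol_delay1} by $b_i$, cancel the consensus term edge by edge using $A_{ij}=A_{ji}$ and $\tau_{ij}(k)=\tau_{ji}(k)$, and remove the momentum term by induction from $w_i(0)=0$. Your recursion $\Delta(k+1)=\mu\Delta(k)$ with $\Delta(1)=0$ is an explicit version of the paper's one-line induction showing $\sum_i b_i w_i(k)=0$.
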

\begin{proof}
	Summing~\eqref{eq_sol_delay1} over all nodes weighted by $b_i$, the gradient-tracking term vanishes by antisymmetry ($A_{ij} = A_{ji}$ and $\mathcal{I}_{k,ij}(r) = \mathcal{I}_{k,ji}(r)$ from symmetric delays). The identity $\sum_{i=1}^n b_i w_i(k) = 0$ follows by induction from $w_i(0) = 0$. Therefore $\sum_{i=1}^n b_i z_i(k+1) = \sum_{i=1}^n b_i z_i(k) = D$.
\end{proof}

\begin{rem}
	The proof critically uses $\tau_{ij}(k) = \tau_{ji}(k)$. This property contrasts with ADMM-based approaches \cite{zhu2025deed,10740459,liu2023distributed0,han2024distributed}, where feasibility is achieved only asymptotically.
\end{rem}

\subsection{Feasible-Subspace Error Bound and Quantized Optimality Gap}

\begin{lem}[Feasible-subspace error bound] \label{lem_error_bound}
	Under Assumption~\ref{ass_cvx}, let $\mb z^\star$ be the optimizer of $F$ over $\mc{Z}_D := \{\mb{z} : \mb{s}_n^\top\mb{z} = D\}$, and define $\overline F(\mb z):=F(\mb z)-F(\mb z^\star)$ and $\overline{\boldsymbol{\phi}}(\mb{z}) := (I - \frac{1}{n}\mb{1}_n\mb{1}_n^\top)\boldsymbol{\phi}(\mb{z})$ with $\boldsymbol{\phi} = B\nabla_z F$. Then, every $\mb{z} \in \mc{Z}_D$ satisfies
	\begin{align} \label{eq_error_bound}
		\overline{F}(\mb{z}) \leq \frac{u}{4v^2}\|\overline{\boldsymbol{\phi}}(\mb{z})\|^2 \leq \frac{u}{4v^2\lambda_2}\Phi(\mb{z})
	\end{align}
	for the connected Laplacian $L$.
\end{lem}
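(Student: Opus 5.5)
We prove the first inequality by an upper quadratic bound on $\overline F$ along the feasible direction, a lower bound linking the displacement to the projected signed gradient, and then the Laplacian bound of Lemma~\ref{lem_xLy}. Fix $\mb z\in\mc Z_D$ and set $\mb x:=\mb z-\mb z^\star$. Then $\mb s_n^\top\mb x=0$, and since $B\mathbf 1_n=\mb s_n$ this is $\mathbf 1_n^\top B\mb x=0$. Because $B\nabla F(\mb z^\star)=\psi^\star\mathbf 1_n$ (Lemma~\ref{lem_z*}), we have $\nabla F(\mb z^\star)^\top\mb x=\psi^\star\mb s_n^\top\mb x=0$.

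\textbf{Step 1 (upper bound on $\overline F$).} Applying Lemma~\ref{lem_strict} between $\mb z^\star$ and $\mb z$ gives
\[
\overline F(\mb z)\leq \nabla F(\mb z^\star)^\top\mb x+u\|\mb x\|^2=u\|\mb x\|^2 .
\]

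\textbf{Step 2 (bound $\|\mb x\|$ by $\overline{\boldsymbol\phi}$).} The scalar gradients are separable. By the mean value theorem, $\nabla_i(z_i)-\nabla_i(z_i^\star)=c_ix_i$ with $c_i\in[2v,2u]$. Hence
\[
\boldsymbol\phi(\mb z)-\psi^\star\mathbf 1_n = B\,C\,\mb x,\qquad C=\operatorname{diag}(c_i).
\]
Projecting with $P:=I-\frac1n\mathbf 1_n\mathbf 1_n^\top$ removes the $\psi^\star\mathbf 1_n$ term, so $\overline{\boldsymbol\phi}=PBC\mb x$. Set $\mb y:=B\mb x$, which satisfies $\mathbf 1_n^\top\mb y=0$, i.e.\ $P\mb y=\mb y$. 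Since $B$ and $C$ are diagonal they commute, so $\overline{\boldsymbol\phi}=PC\mb y$. Using $\|P\mb a\|\geq |\mb y^\top\mb a|/\|\mb y\|$ for $\mb y\in\operatorname{range}P$ with $\mb a=C\mb y$ gives
\[
\|\overline{\boldsymbol\phi}\|\geq \frac{\mb y^\top C\mb y}{\|\mb y\|}\geq 2v\|\mb y\|=2v\|\mb x\|.
\]
Therefore $\|\mb x\|^2\leq\|\overline{\boldsymbol\phi}\|^2/(4v^2)$. Combining with Step 1 yields $\overline F(\mb z)\leq \frac{u}{4v^2}\|\overline{\boldsymbol\phi}(\mb z)\|^2$, which is the first inequality.

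\textbf{Step 3 (Laplacian form).} Lemma~\ref{lem_xLy} applied to $\boldsymbol\phi$ gives $\lambda_2\|\overline{\boldsymbol\phi}\|^2\leq\boldsymbol\phi^\top L\boldsymbol\phi=\Phi(\mb z)$. Dividing by $\lambda_2>0$, which is guaranteed by Assumption~\ref{ass_conn}, gives the second inequality.

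The only delicate point is Step~2. The projection $P$ does not commute with the diagonal $C$, so one cannot simply bound $\|PC\mb y\|$ by $\sigma_{\min}(C)\|\mb y\|$ through operator norms. The inner-product argument avoids this: test against $\mb y$ itself, which lies in the range of $P$, so that $\mb y^\top P C\mb y=\mb y^\top C\mb y\geq 2v\|\mb y\|^2$.
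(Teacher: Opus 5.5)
Your proposal is correct and follows essentially the same route as the paper's proof. Both use the mean-value representation $\overline{\boldsymbol\phi}=PC\mb y$ (the paper writes it via $BHB=H$), test against $\mb y\perp\mathbf 1_n$ with Cauchy--Schwarz to get $\|\mb x\|\le\|\overline{\boldsymbol\phi}\|/(2v)$, bound $\overline F\le u\|\mb x\|^2$, and apply Lemma~\ref{lem_xLy}; you also spell out the steps the paper leaves implicit, namely $\nabla F(\mb z^\star)^\top\mb x=0$ via Lemma~\ref{lem_strict} and the final $\lambda_2$ division.
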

\begin{proof}
	Let $\mb{x} := \mb{z} - \mb z^\star$ with $\mb{s}_n^\top\mb{x} = 0$, $\mb{y} := B\mb{x}$ with $\mb{1}_n^\top\mb{y} = 0$. By the mean-value theorem with diagonal Hessian $2vI \preceq H \preceq 2uI$ and $BHB = H$: $\overline{\boldsymbol{\phi}} = (I - \frac{1}{n}\mb{1}_n\mb{1}_n^\top)H\mb{y}$. Since $\mb{y} \perp \mb{1}_n$: $\mb{y}^\top\overline{\boldsymbol{\phi}} \geq 2v\|\mb{y}\|^2$. Cauchy\nobreakdash-Schwarz gives $\|\mb{x}\| \leq \frac{1}{2v}\|\overline{\boldsymbol{\phi}}\|$, then $\overline{F} \leq u\|\mb{x}\|^2 \leq \frac{u}{4v^2}\|\overline{\boldsymbol{\phi}}\|^2$.
\end{proof}

\begin{prop}[Approximate optimality of quantized equilibria] \label{prop_gap_quantized_eq}
	Let Assumptions~\ref{ass_cvx} and \ref{ass_conn} hold. Let $(\mb{z}^\infty, \mb{0})$ be an equilibrium of Algorithm~\ref{alg_1} {with $z^\infty\in\Omega_R$}. Then, $\mb{z}^\infty$ is feasible and there exists $\alpha \in \mathbb{R}$ such that $q(B\nabla_z F(\mb{z}^\infty)) = \alpha\mb{1}_n$. Moreover, defining
	\begin{align}
		\Delta_\rho := (e^\rho - 1){G_R}, \label{eq_delta_rho}
	\end{align}
	we have
	\begin{align}
		\max_{i,j}|\phi_i(\mb{z}^\infty) - \phi_j(\mb{z}^\infty)| &\leq \Delta_\rho, \label{eq_delta_rho_range} \\
		\|\mb{z}^\infty - \mb z^\star\| &\leq \frac{\sqrt{n}}{4v}\,\Delta_\rho, \label{eq_dist_gap} \\
		F(\mb{z}^\infty) - F(\mb z^\star) &\leq \frac{un}{16v^2}\,\Delta_\rho^2. \label{eq_cost_gap}
	\end{align}
	Hence, every fixed-quantized equilibrium has an $O(\rho^2)$ optimality gap.
\end{prop}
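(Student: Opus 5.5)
Feasibility of $\mb z^\infty$ follows from Lemma~\ref{lem_feas}, since the equilibrium is reached along a trajectory of Algorithm~\ref{alg_1} started from~\eqref{eq_init}. If one prefers to regard it as an abstract equilibrium, it is by definition taken on the conserved hyperplane $\mc Z_D$. The consensus characterization $q(B\nabla_z F(\mb z^\infty))=\alpha\mb 1_n$ is exactly Lemma~\ref{lem_quantized_equilibrium}. The remaining three bounds then follow in sequence: spread of the signed gradients, distance to $\mb z^\star$, and cost gap.

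\textbf{Range bound~\eqref{eq_delta_rho_range}.} Write $\phi_i:=\phi_i(\mb z^\infty)$, so that $q(\phi_i)=\alpha$ for all $i$ (the quantizer is odd, so $q(b_i\nabla_i)=b_iq(\nabla_i)$ and the two formulations agree).

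If $\alpha=0$, then every $\phi_i=0$, because $q(s)\neq0$ for $s\neq0$. The spread is therefore zero.

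If $\alpha\neq0$, all $\phi_i$ share the sign of $\alpha$. By~\eqref{eq_quan_sect} applied to $q(\phi_i)=\alpha$, each $\phi_i$ lies in the interval $[e^{-\rho/2}|\alpha|,\,e^{\rho/2}|\alpha|]$, taken with that sign. Hence
\[
|\phi_i-\phi_j|\le (e^{\rho/2}-e^{-\rho/2})|\alpha|.
\]
To convert $|\alpha|$ into $G_R$, use $|\alpha|\le e^{\rho/2}|\phi_i|\le e^{\rho/2}G_R$, where the last step is Lemma~\ref{lem_sublevel_envelope} (valid because $\mb z^\infty\in\Omega_R$). Combining, the spread is at most $(e^{\rho}-1)G_R=\Delta_\rho$. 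This step is the only place where the quantizer geometry enters, and I expect it to be the main point requiring care: the bound must be expressed with the exact multiplicative constant and routed through $|\alpha|$, not through the additive bound~\eqref{eq_quan_relerr}.

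\textbf{Distance and cost bounds.} From the range bound, every component of the centered vector $\overline{\boldsymbol\phi}(\mb z^\infty)$ satisfies $|\overline\phi_i|\le\Delta_\rho/2$. This holds because each $\phi_i$ lies in an interval of length $\Delta_\rho$, and so does their mean. Summing the squares gives
\[
\|\overline{\boldsymbol\phi}\|\le \tfrac{\sqrt n}{2}\Delta_\rho.
\]
The proof of Lemma~\ref{lem_error_bound} established $\|\mb z-\mb z^\star\|\le\frac{1}{2v}\|\overline{\boldsymbol\phi}(\mb z)\|$ for every $\mb z\in\mc Z_D$. Applying it at $\mb z^\infty$ gives~\eqref{eq_dist_gap}:
\[
\|\mb z^\infty-\mb z^\star\|\le\frac{\sqrt n}{4v}\Delta_\rho.
\]
Finally, the first inequality in~\eqref{eq_error_bound} gives
\[
\overline F(\mb z^\infty)\le\frac{u}{4v^2}\cdot\frac{n}{4}\Delta_\rho^2=\frac{un}{16v^2}\Delta_\rho^2,
\]
which is~\eqref{eq_cost_gap}.

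The $O(\rho^2)$ conclusion follows because $e^\rho-1=O(\rho)$ and $G_R$ does not depend on $\rho$.
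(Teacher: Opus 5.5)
Your argument follows the paper's proof step for step. Feasibility comes from Lemma~\ref{lem_feas}, and the consensus form from Lemma~\ref{lem_quantized_equilibrium}. The spread bound comes from the multiplicative sector~\eqref{eq_quan_sect} routed through $|\alpha|\le e^{\rho/2}G_R$, and Lemma~\ref{lem_error_bound} finishes. For feasibility of an abstract equilibrium, the cleanest justification is that $\Omega_R\subseteq\mc Z_D$ by~\eqref{eq_Omega_R}, so the hypothesis $\mb z^\infty\in\Omega_R$ already gives it. You need not appeal to a trajectory reaching the equilibrium.

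One step is false as written: the claim that every centered component satisfies $|\overline\phi_i|\le\Delta_\rho/2$. Knowing that all $\phi_i$ and their mean lie in a common interval of length $\Delta_\rho$ only yields $|\phi_i-\tfrac1n\sum_j\phi_j|\le\Delta_\rho$. For $n=3$, the values $(a,a,a+\Delta_\rho)$ have mean $a+\Delta_\rho/3$, so $|\overline\phi_3|=2\Delta_\rho/3>\Delta_\rho/2$. The bound you actually justified gives only $\|\overline{\boldsymbol\phi}\|\le\sqrt n\,\Delta_\rho$, which loses a factor $2$ in~\eqref{eq_dist_gap} and a factor $4$ in~\eqref{eq_cost_gap}.

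The norm bound you need is still true, but it must be argued in aggregate rather than componentwise. Let $P:=I-\frac1n\mb 1_n\mb 1_n^\top$, the orthogonal projection with $P\mb 1_n=0$, and let $m$ be the midpoint of $[\min_j\phi_j,\max_j\phi_j]$. Then
\[ \|\overline{\boldsymbol\phi}\|=\|P(\boldsymbol\phi-m\mb 1_n)\|\le\|\boldsymbol\phi-m\mb 1_n\|\le\tfrac{\sqrt n}{2}\Delta_\rho, \]
since each $|\phi_i-m|\le\Delta_\rho/2$. This is also what implicitly underlies the paper's unexplained step $\|\overline{\boldsymbol\phi}(\mb z^\infty)\|^2\le\frac n4\Delta_\rho^2$. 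With this replacement, the rest of your proof, including both applications of Lemma~\ref{lem_error_bound}, goes through unchanged.
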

\begin{proof}
	Feasibility follows from Lemma~\ref{lem_feas}, and the quantized-consensus characterization follows from Lemma~\ref{lem_quantized_equilibrium}. If $\alpha = 0$, then $q(\phi_i^\infty)=0$ for every $i$, which for the logarithmic quantizer implies $\phi_i^\infty=0$ and the claims are immediate. Assume therefore that $\alpha \neq 0$. By~\eqref{eq_quan_sect},
	\[
	|\alpha|e^{-\rho/2} \leq |\phi_i^\infty| \leq |\alpha|e^{\rho/2}, \qquad \forall i.
	\]
	Since $|\phi_i^\infty| \leq {G_R}$, we also have $|\alpha| \leq e^{\rho/2}{G_R}$. Therefore
	\[
	\max_{i,j}|\phi_i^\infty - \phi_j^\infty|
	\leq |\alpha|\bigl(e^{\rho/2}-e^{-\rho/2}\bigr)
	\leq {G_R}(e^\rho - 1)
	= \Delta_\rho,
	\]
	which is~\eqref{eq_delta_rho_range}. Thus, we obtain $\|\overline{\boldsymbol{\phi}}(\mb{z}^\infty)\|^2 \leq \frac{n}{4}\Delta_\rho^2$. Applying Lemma~\ref{lem_error_bound} yields~\eqref{eq_dist_gap} and~\eqref{eq_cost_gap}.
\end{proof}

\subsection{Practical Convergence Under Fixed Quantization} \label{sec_convergence}

We now state the main result for the fixed log-quantized dynamics. Recall from Remark~\ref{rem_penalty} that $F$, $f_i$, and $u$ refer to the penalized cost. {Throughout this subsection $\mc G=(\mc V,\mc E)$ is static, connected, and undirected, with fixed Laplacian $L$, and $\overline F(k):=F(\mb z(k))-F(\mb z^\star)$, where $\mb z^\star$ is the unique optimizer of the smooth penalized problem over $\mc Z_D$.} Consider the following definitions:
\begin{align*}
	\boldsymbol{\phi}(k) &:= B\nabla_z F(\mb{z}(k)), \qquad
	\mb{g}(k) := q(\boldsymbol{\phi}(k)),\\
	\Phi(k) &:= \boldsymbol{\phi}(k)^\top L\boldsymbol{\phi}(k),\qquad
	\delta(k) := \mb{z}(k+1)-\mb{z}(k),\\
	H(k)&:=\sum_{s=k-\bar\tau}^{k-1}\|\delta(s)\|^2,\\
	W(k)&:=\sum_{j=1}^{\bar\tau}\ \sum_{s=k-j}^{k-1}\|\delta(s)\|^2,
\end{align*}
with $H(k)=W(k)=0$ when $\bar\tau=0$. {For $\bar\tau\geq1$, $H(k)\leq W(k)\leq\bar\tau H(k)$ and
	$W(k+1)-W(k)=\bar\tau\|\delta(k)\|^2-H(k)$.}

\begin{lem}[Correct delay-mismatch bound] \label{lem_delay_mismatch}
		Choose an orientation $h_e=e_i-e_j$ for each edge $e=\{i,j\}$, let $a_e=A_{ij}$ and $r_e(k)=\tau_{ij}(k)$, and define
		\begin{align}
			p(k):=B\sum_{e\in\mc E}a_eh_eh_e^\top\{g(k)-g(k-r_e(k))\}. \label{eq_p_def}
		\end{align}
		If $z(k-r),\,r=0,\ldots,\bar\tau$, lie in $\Omega_R$, then
		\begin{align}
			\|p(k)\|^2&\leq C_HH(k)+C_{p,\rho}, \label{eq_p_correct}\\
			C_H&:=16u^2\lambda_nd_{\max}\bar\tau,\nonumber\\
			C_{p,\rho}&:=16\chi_{\bar\tau}\lambda_nnd_{\max}\delta_\rho^2G_R^2, \label{eq_p_constants}
		\end{align}
		where $\chi_{\bar\tau}=0$ if $\bar\tau=0$ and $1$ otherwise.
\end{lem}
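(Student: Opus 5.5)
The plan is to split the increment of the quantized signal into a smooth-gradient history part and a quantization-error part. Each part is bounded separately through the Laplacian structure of $\sum_e a_eh_eh_e^\top=L$. First the trivial case: if $\bar\tau=0$ then $r_e(k)=0$ for every edge, so $p(k)=0$ and the bound holds with $C_H=C_{p,\rho}=0$. More generally, edges with $r_e(k)=0$ contribute nothing, so only delayed edges matter. For $\bar\tau\ge1$, write $\mb g=\boldsymbol\phi+\mb e$ as in Lemma~\ref{lem_qerror} and set, for each edge,
\[
\Delta^\phi_e:=\boldsymbol\phi(k)-\boldsymbol\phi(k-r_e(k)),\qquad \Delta^e_e:=\mb e(k)-\mb e(k-r_e(k)).
\]
Then $p=p_\phi+p_e$ with the obvious definitions, and $\|p\|^2\le 2\|p_\phi\|^2+2\|p_e\|^2$.

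\emph{Common Laplacian reduction.} Since $B$ is orthogonal, $\|p_\phi\|=\|\sum_e a_eh_e(h_e^\top\Delta^\phi_e)\|$. Write this as $Mc$ with $M$ having columns $\sqrt{a_e}h_e$ and $c_e=\sqrt{a_e}\,h_e^\top\Delta^\phi_e$. Then $MM^\top=L$, so $\|Mc\|^2\le\lambda_n\|c\|^2=\lambda_n\sum_e a_e(h_e^\top\Delta_e^\phi)^2$. The same reduction applies to $p_e$. The key point is that each edge uses only the two endpoint components of its own delayed difference, with a single common delay $r_e$ (Assumption~\ref{ass_delay}).

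\emph{Smooth history term.} Use $(x-y)^2\le2x^2+2y^2$ for the endpoint components. Then regroup $\sum_e a_e(X_i+X_j)=\sum_i d_iX_i\le d_{\max}\sum_iX_i$. For each node, the global $2u$-Lipschitz gradient (Assumption~\ref{ass_cvx}), together with $|b_i|=1$, gives
\[
|\phi_i(k)-\phi_i(k-r)|\le 2u\sum_{s=k-r}^{k-1}|\delta_i(s)|.
\]
Cauchy--Schwarz then bounds its square by $4u^2\bar\tau\sum_{s=k-\bar\tau}^{k-1}\delta_i(s)^2$. Summing over $i$ yields $\|p_\phi\|^2\le 8u^2\lambda_nd_{\max}\bar\tau H(k)$. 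After the factor $2$ this is exactly $C_HH(k)$. The convention $z(k)=z(0)$ for $k<0$ makes $\delta(s)=0$ at negative times, so $H$ is well defined.

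\emph{Quantization term.} Here no Lipschitz property of $q$ is used: the error is treated purely as a bounded additive disturbance. Because all $z(k-r)$ lie in $\Omega_R$, Lemma~\ref{lem_sublevel_envelope} and Lemma~\ref{lem_qerror} give $|e_i(k-r)|\le\delta_\rho G_R$ for every relevant $r$. Hence each component of $\Delta^e_e$ is at most $2\delta_\rho G_R$, and $(h_e^\top\Delta^e_e)^2\le16\delta_\rho^2G_R^2$. Summing with $\sum_e a_e=\tfrac12\sum_id_i\le nd_{\max}/2$ gives $\|p_e\|^2\le8\lambda_nnd_{\max}\delta_\rho^2G_R^2$. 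After the factor $2$ this is $C_{p,\rho}$.

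The main obstacle is bookkeeping the constants so that they match $16$ in both terms. The first split into the $\phi$ and $e$ parts must be done once, at the level of $p$, rather than edgewise. The edgewise endpoint split should only be applied to the smooth part, where it combines with the degree regrouping. For the quantization part one should instead bound the edge difference directly by $4\delta_\rho G_R$, not split it further, which would lose a factor.
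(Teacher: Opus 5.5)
Your proposal is correct and follows the paper's proof essentially step for step. It uses the same split $p=p_\phi+p_e$ with $\|p\|^2\le2\|p_\phi\|^2+2\|p_e\|^2$, the same incidence-matrix bound $\|\sum_e a_eh_e\alpha_e\|^2\le\lambda_n\sum_e a_e\alpha_e^2$, the same $4u^2\bar\tau$ Cauchy--Schwarz history estimate regrouped by weighted degree, and the same additive envelope $|e_i|\le\delta_\rho G_R$ from Lemmas~\ref{lem_sublevel_envelope}--\ref{lem_qerror}. Your closing bookkeeping warnings are harmless but unnecessary: splitting $\phi$ and $e$ edgewise, or splitting the quantization edge difference at its endpoints via $(x-y)^2\le2x^2+2y^2$, gives exactly the same constants $16$.
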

\begin{proof}
		Write $g=\phi+e_q$ and split $p=p_\phi+p_e$. The incidence-matrix inequality
		$\|\sum_ea_eh_e\alpha_e\|^2\leq\lambda_n\sum_ea_e\alpha_e^2$ applies to both parts. Global smoothness gives
		$|\phi_i(k)-\phi_i(k-r)|^2\leq4u^2\bar\tau\sum_{s=k-\bar\tau}^{k-1}|\delta_i(s)|^2$; summing endpoints with weighted degree at most $d_{\max}$ yields $\|p_\phi\|^2\leq8u^2\lambda_nd_{\max}\bar\tau H(k)$. Lemmas~\ref{lem_sublevel_envelope}-\ref{lem_qerror} give $|e_{q,i}(t)|\leq\delta_\rho G_R$, and hence $\|p_e\|^2\leq8\chi_{\bar\tau}\lambda_nnd_{\max}\delta_\rho^2G_R^2$. Finally, $\|p_\phi+p_e\|^2\leq2\|p_\phi\|^2+2\|p_e\|^2$ proves~\eqref{eq_p_correct}.
\end{proof}

\begin{rem}
	{Lemma~\ref{lem_delay_mismatch} does not use an estimate of the form $\|q(\phi(k))-q(\phi(k-r))\|\leq C\|\phi(k)-\phi(k-r)\|$; the discontinuous quantization error is bounded additively.}
\end{rem}

\begin{theorem}[Ergodic practical-stability under delays and momentum] \label{thm_conv}
	Let Assumptions~\ref{ass_cvx}, \ref{ass_conn}, and~\ref{ass_delay} hold on a static connected undirected graph. Fix $R>0$, compute $G_R$ from~\eqref{eq_G_R}, and suppose the feasible initialization and constant prehistory satisfy $\mc V(0)\leq R$. Consider the Lyapunov\nobreakdash-Krasovskii functional
	\begin{align} \label{eq_lyap_full}
		\mc{V}(k) := \overline{F}(k) + \frac{{\omega_w}}{2}\|\mb{w}(k)\|^2 + {\omega_H}\,W(k),
	\end{align}
	with {Lyapunov\nobreakdash-Krasovskii weights $\omega_w,\omega_H>0$}, Young parameters $\epsilon, \gamma > 0$, and auxiliary parameter $\xi \in (0,2)$. Define {$C_H,C_{p,\rho}$ by~\eqref{eq_p_constants}} and the quantization-induced constants
	\begin{align}
		C_{\times,\rho} &:= \frac{\lambda_n n}{2\xi}\delta_\rho^2 {G_R^2}, \quad C_{g,\rho} := 4\lambda_n d_{\max} n\delta_\rho^2 {G_R^2}, \label{eq_Crho_consts}
	\end{align}
	and the descent/residual coefficients
	\begin{align}
		\wt{\beta}_1 &:= \eta_\tau(1 - \xi/2) - \frac{\mu}{2\epsilon\lambda_2}
		- \frac{\eta_\tau}{2\gamma\lambda_2}\nonumber\\
		&\quad - (6u + 3{\omega_w} + 6{\omega_H}\bar\tau)\eta_\tau^2\lambda_n, \label{eq_beta1} \\
		\wt{\beta}_2 &:= \frac{{\omega_w}}{2}(1-3\mu^2) - 3u\mu^2 - \frac{\mu\epsilon}{2} - 3{\omega_H}\bar\tau\mu^2, \label{eq_beta2} \\
		\wt{\beta}_3 &:= {\omega_H} - \left(\frac{\eta_\tau\gamma}{2} + (3u+\tfrac{3{\omega_w}}{2}+3{\omega_H}\bar\tau)\eta_\tau^2\right){C_H}, \label{eq_beta3} \\
		\varepsilon_\rho &:= \eta_\tau C_{\times,\rho}\nonumber\\
		&\quad+\left(\frac{\eta_\tau\gamma}{2}+(3u+\tfrac{3{\omega_w}}{2}+3{\omega_H}\bar\tau)\eta_\tau^2\right){C_{p,\rho}}\nonumber\\
		&\quad+(3u + \tfrac{3{\omega_w}}{2} + 3{\omega_H}\bar\tau)\eta_\tau^2 C_{g,\rho}. \label{eq_epsrho}
	\end{align}
	A necessary condition for this sufficient certificate is $\mu<1/\sqrt3$, since $\wt\beta_2\leq \omega_w(1-3\mu^2)/2$. Define
		\begin{align}
			\kappa_H:=\begin{cases}
				\wt\beta_3/(\omega_H\bar\tau),&\bar\tau\geq1,\\
				+\infty,&\bar\tau=0,
			\end{cases} \label{eq_kappa_H}
		\end{align}
		and let
		\begin{align}
			A_F:=\frac{u}{4v^2\lambda_2},\qquad
			\kappa_V:=\min\left\{1,\frac{\wt\beta_1}{A_F},\frac{2\wt\beta_2}{\omega_w},\kappa_H\right\}. \label{eq_kappa_V}
		\end{align}
		If $\wt{\beta}_1,\wt{\beta}_2,\wt{\beta}_3>0$ and the self-consistency condition
		\begin{align}
			\varepsilon_\rho\leq\kappa_VR \label{eq_self_consistency}
		\end{align}
		holds, then $z(k)\in\Omega_R$ for every $k$ and
	\begin{align} \label{eq_practical_descent}
		\mc{V}(k+1) - \mc{V}(k) \leq -\wt{\beta}_1\Phi(k) - \wt{\beta}_2\|\mb{w}(k)\|^2 - \wt{\beta}_3 H(k) + \varepsilon_\rho.
	\end{align}
	Moreover,
		\begin{align}
			\mc V(k)&\leq(1-\kappa_V)^k\mc V(0)
			+\frac{1-(1-\kappa_V)^k}{\kappa_V}\varepsilon_\rho, \label{eq_V_ultimate}
		\end{align}
	Consequently,
	\begin{align} \label{eq_average_bound}
		&\limsup_{N\to\infty}\frac{1}{N}\sum_{k=0}^{N-1}
		\Big(\Phi(k) + \|\mb{w}(k)\|^2 + H(k)\Big)\nonumber\\
		&\hspace{2em}\leq \frac{\varepsilon_\rho}{\min\{\wt{\beta}_1, \wt{\beta}_2, \wt{\beta}_3\}}.
	\end{align}
	Hence, the fixed-quantized dynamics satisfies an explicit ergodic residual bound of order $O(\delta_\rho^2)=O(\rho^2)$.
\end{theorem}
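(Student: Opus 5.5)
The plan is to run a one-step Lyapunov–Krasovskii descent computation under the hypothesis that the current history lies in $\Omega_R$, and then close an induction that keeps the orbit inside $\Omega_R$.

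\emph{Vector form of the delayed update.} First I rewrite~\eqref{eq_sol_delay1} in vector form. Using the edge orientation of Lemma~\ref{lem_delay_mismatch}, the delayed Laplacian action is $BL\mb g(k)$ minus the mismatch $p(k)$ of~\eqref{eq_p_def}. This gives
\begin{align*}
\delta(k) = -\eta_\tau BL\mb g(k) + \eta_\tau p(k) + \mu \mb w(k), \qquad \mb w(k+1)=\delta(k).
\end{align*}
By Lemma~\ref{lem_feas}, every iterate lies in $\mc Z_D$.

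\emph{Inductive hypothesis.} I argue by induction on $k$. Assume $\mc V(t)\le R$ for all $t\le k$, including the constant prehistory. Since $\overline F\le \mc V$, this gives $\mb z(t)\in\Omega_R$, so Lemmas~\ref{lem_sublevel_envelope}--\ref{lem_delay_mismatch} apply at all relevant times.

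\emph{Bounding the three pieces of $\mc V(k+1)-\mc V(k)$.} For $\overline F$, I apply Lemma~\ref{lem_strict} and use $\nabla F^\top B=\boldsymbol\phi^\top$. The linear term splits into three parts.
\begin{itemize}
\item $-\eta_\tau\boldsymbol\phi^\top L\mb g$, which Lemma~\ref{lem_phiLg} bounds by $-\eta_\tau(1-\xi/2)\Phi+\eta_\tau C_{\times,\rho}$.
\item $\mu\boldsymbol\phi^\top B\mb w$. I handle it by Young with parameter $\epsilon$. Feasibility gives $\mb 1^\top B\mb w=0$, so only $\overline{\boldsymbol\phi}$ enters, and Lemma~\ref{lem_xLy} converts $\|\overline{\boldsymbol\phi}\|^2$ to $\Phi/\lambda_2$. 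This yields the terms $\mu\Phi/(2\epsilon\lambda_2)$ and $\mu\epsilon\|\mb w\|^2/2$. This step needs care: I must check that $B\mb w$ is orthogonal to $\mb 1$, which holds because $\mb s_n^\top\mb w=0$.
\item $\eta_\tau\boldsymbol\phi^\top B p$. I treat it with Young parameter $\gamma$ and the same centering, if $p$ is orthogonal to the appropriate direction (it is, being in the range of $BL$-type incidence sums). This gives $\eta_\tau\Phi/(2\gamma\lambda_2)+\eta_\tau\gamma\|p\|^2/2$.
\end{itemize}
The quadratic term uses $\|\delta\|^2\le 3\eta_\tau^2\mb g^\top L^2\mb g+3\eta_\tau^2\|p\|^2+3\mu^2\|\mb w\|^2$. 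The momentum piece $\frac{\omega_w}{2}(\|\delta\|^2-\|\mb w\|^2)$ and the Krasovskii piece $\omega_H(\bar\tau\|\delta\|^2-H)$ are bounded by the same triple split.

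\emph{Collecting terms.} I substitute Lemma~\ref{lem_gLg} for $\mb g^\top L^2\mb g$ and Lemma~\ref{lem_delay_mismatch} for $\|p\|^2$, then group the coefficients of $\Phi$, $\|\mb w\|^2$ and $H$. These should reproduce $\wt\beta_1,\wt\beta_2,\wt\beta_3$ and $\varepsilon_\rho$ exactly, which gives~\eqref{eq_practical_descent}. The constants $6u$, $3\omega_w$ and $6\omega_H\bar\tau$ come from the factor 2 in~\eqref{eq_gL2g} times 3.

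\emph{From descent to contraction.} Next I convert~\eqref{eq_practical_descent} into $\mc V(k+1)\le(1-\kappa_V)\mc V(k)+\varepsilon_\rho$, using three comparisons.
\begin{itemize}
\item $\overline F\le A_F\Phi$ from Lemma~\ref{lem_error_bound}.
\item $\frac{\omega_w}{2}\|\mb w\|^2$ is controlled by $\wt\beta_2\|\mb w\|^2$ via the ratio $2\wt\beta_2/\omega_w$.
\item $\omega_H W\le\omega_H\bar\tau H$, which gives the ratio $\kappa_H$.
\end{itemize}
Taking the minimum of these ratios and $1$ yields $\kappa_V$.

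\emph{Closing the induction.} Since $\mc V(k)\le R$ and $\varepsilon_\rho\le\kappa_V R$, we get $\mc V(k+1)\le(1-\kappa_V)R+\kappa_V R=R$. This closes the induction and shows $\mb z(k)\in\Omega_R$ for all $k$. Iterating the recursion gives~\eqref{eq_V_ultimate}.

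\emph{Ergodic bound.} Finally, I sum~\eqref{eq_practical_descent} from $0$ to $N-1$ and use $\mc V\ge 0$. This gives $\min\{\wt\beta_1,\wt\beta_2,\wt\beta_3\}\sum(\Phi+\|\mb w\|^2+H)\le \mc V(0)+N\varepsilon_\rho$. Dividing by $N$ and letting $N\to\infty$ yields~\eqref{eq_average_bound}. The $O(\rho^2)$ claim follows because every constant in $\varepsilon_\rho$ carries $\delta_\rho^2$. The necessity remark $\mu<1/\sqrt3$ is immediate from~\eqref{eq_beta2}.

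\emph{Main obstacle.} I expect the hardest part to be the bookkeeping in the one-step bound, especially two points. The first is justifying the centering that replaces $\|\boldsymbol\phi\|$ by $\|\overline{\boldsymbol\phi}\|\le\sqrt{\Phi/\lambda_2}$ in the cross terms with $\mb w$ and $p$. This requires checking that $B\mb w$ and $Bp$ are orthogonal to $\mb 1_n$. The second is ensuring that Lemma~\ref{lem_delay_mismatch} is invoked only on histories already certified to lie in $\Omega_R$. That is why the induction must carry the full window $t\in[k-\bar\tau,k]$ rather than just time $k$.
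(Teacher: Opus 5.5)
Your proposal is correct and follows essentially the same route as the paper's proof. Both use the decomposition $\delta(k)=-\eta_\tau BL\mathbf g(k)+\eta_\tau p(k)+\mu\mathbf w(k)$, the Young splittings with $\epsilon,\gamma$ after centering via $\mathbf s_n^\top\mathbf w=\mathbf s_n^\top p=0$, and the three-way bound on $\|\delta\|^2$ fed into all three pieces of $\mathcal V$; both then close the contraction $\mathcal V(k+1)\le(1-\kappa_V)\mathcal V(k)+\varepsilon_\rho$ by the invariant-sublevel induction, where your strong induction over the full window $t\in[k-\bar\tau,k]$ is a more explicit statement of the paper's hypothesis that ``the states used at time $k$'' lie in $\Omega_R$.
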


\begin{proof}
	All iterates lie in $\mc{Z}_D$ by Lemma~\ref{lem_feas}. Assume inductively that the states used at time $k$ lie in $\Omega_R$; this is true at $k=0$ by the constant-prehistory convention. The delayed displacement decomposes as
		$\delta(k)=-\eta_\tau BLg(k)+\eta_\tau p(k)+\mu w(k)$, and Lemma~\ref{lem_delay_mismatch} gives $\|p(k)\|^2\leq C_HH(k)+C_{p,\rho}$.
	
	\textit{Step 1: cost increment.} By Lemma~\ref{lem_strict}: $\overline{F}(k+1) - \overline{F}(k) \leq \nabla_z F^\top\delta + u\|\delta\|^2$. The Laplacian term uses Lemma~\ref{lem_phiLg}: $-\eta_\tau\boldsymbol{\phi}^\top L\mb{g} \leq -\eta_\tau(1-\xi/2)\Phi + \eta_\tau C_{\times,\rho}$. The cross-terms use $\mb{s}_n^\top\mb{w} = 0$ and $\mb{s}_n^\top\mb{p} = 0$ (Lemma~\ref{lem_feas}):
	\begin{align}
		\mu|\nabla_z F^\top\mb{w}| &\leq \frac{\mu}{2\epsilon\lambda_2}\Phi + \frac{\mu\epsilon}{2}\|\mb{w}\|^2, \\
		\eta_\tau|\nabla_z F^\top\mb{p}| &\leq \frac{\eta_\tau}{2\gamma\lambda_2}\Phi + \frac{\eta_\tau\gamma}{2}{(C_H H+C_{p,\rho})}.
	\end{align}
	Using Lemma~\ref{lem_gLg}: $\mb{g}^\top L^2\mb{g} \leq 2\lambda_n\Phi + C_{g,\rho}$. Therefore, we obtain
	\begin{align}
		\|\delta\|^2 &\leq 3\eta_\tau^2(2\lambda_n\Phi + C_{g,\rho})\nonumber\\
		&\quad + 3\eta_\tau^2 {(C_HH+C_{p,\rho})} + 3\mu^2\|\mb{w}\|^2. \label{eq_delta_sq_quant}
	\end{align}
	
	\textit{Step 2: momentum increment.} Since $\mb{w}(k+1) = \delta(k)$, using~\eqref{eq_delta_sq_quant}:
	\begin{align} \nonumber
		\frac{{\omega_w}}{2}(\|\mb{w}(k+1)\|^2 - \|\mb{w}(k)\|^2)
		&\leq 3{\omega_w}\eta_\tau^2\lambda_n\Phi + \frac{3{\omega_w}}{2}\eta_\tau^2 C_{g,\rho} \\
		&\quad+ \frac{3{\omega_w}}{2}\eta_\tau^2 {(C_HH+C_{p,\rho})}\\
		&\quad- \frac{{\omega_w}}{2}(1-3\mu^2)\|\mb{w}\|^2.
	\end{align}
	
	\textit{Step 3: Krasovskii-history increment.} {The double sum satisfies $W(k+1)-W(k)=\bar\tau\|\delta(k)\|^2-H(k)$.} Hence,
	\begin{align} \nonumber
		{\omega_H}(W(k+1) - W(k)) &\leq 6{\omega_H}\bar\tau\eta_\tau^2\lambda_n\Phi
		+3{\omega_H}\bar\tau\eta_\tau^2C_{g,\rho}\nonumber\\
		&\quad+3{\omega_H}\bar\tau\eta_\tau^2{(C_HH+C_{p,\rho})}\nonumber\\
		&\quad+3{\omega_H}\bar\tau\mu^2\|\mb{w}\|^2-{\omega_H}H(k). \label{eq_W_increment}
	\end{align}
	
	\textit{Step 4: combined descent and invariant set.} Adding all three gives~\eqref{eq_practical_descent}. Lemma~\ref{lem_error_bound} gives $\overline F(k)\leq A_F\Phi(k)$, while $W(k)\leq\bar\tau H(k)$ for $\bar\tau\geq1$ (and both vanish when $\bar\tau=0$). Hence~\eqref{eq_kappa_H}-\eqref{eq_kappa_V} imply $\mc V(k+1)\leq(1-\kappa_V)\mc V(k)+\varepsilon_\rho$. If $\mc V(k)\leq R$, condition~\eqref{eq_self_consistency} gives $\mc V(k+1)\leq R$, so induction proves $z(k)\in\Omega_R$ and validates every bound used above without an orbit assumption. Iterating the scalar recursion gives~\eqref{eq_V_ultimate}. Summing~\eqref{eq_practical_descent} from $k = 0$ to $N-1$, dividing by $N$, and using $\mc{V}(k) \geq 0$ gives~\eqref{eq_average_bound}.
\end{proof}

\begin{rem}[Interpretation of Theorem~\ref{thm_conv}]
	Theorem~\ref{thm_conv} is an ergodic practical-stability result for the fixed-$\rho$ dynamics. It does not claim exact convergence of the state trajectory to the penalized optimizer when $\rho > 0$. Instead, it guarantees that the average gradient-disagreement, momentum energy, and delay-history energy remain bounded by an explicit constant of order $O(\rho^2)$. Exact convergence is recovered only in the unquantized limit $\rho = 0$; see Corollary~\ref{cor_exact_rho0}. {Existence of a fixed-quantized equilibrium is not asserted: discontinuous fixed quantization may instead yield a bounded cycle. Proposition~\ref{prop_gap_quantized_eq} and Corollary~\ref{cor_quantized_limit} are conditional equilibrium statements, whereas~\eqref{eq_average_bound} is unconditional under the theorem hypotheses.}
\end{rem}

\begin{cor}[Exact convergence in the unquantized case] \label{cor_exact_rho0}
	Under the assumptions of Theorem~\ref{thm_conv}, if $\rho = 0$, then $\delta_\rho = \varepsilon_\rho = 0$ and~\eqref{eq_practical_descent} is a strict Lyapunov descent. Consequently, $\sum_{k=0}^\infty \Phi(k) < \infty$, $\sum_{k=0}^\infty \|\mb{w}(k)\|^2 < \infty$, and $\sum_{k=0}^\infty H(k) < \infty$, hence
	\[
	\Phi(k)\to 0,\qquad \mb{w}(k)\to \mb{0},\qquad H(k)\to 0.
	\]
	Every accumulation point of $\mb{z}(k)$ is feasible by Lemma~\ref{lem_feas} and satisfies $\overline{\boldsymbol{\phi}}(\mb{z})=\mb{0}$ because $\Phi(k)\to 0$ and {$L$ is connected}. By Lemma~\ref{lem_z*}, every accumulation point therefore equals the unique optimizer of the penalized problem on $\mc{Z}_D$, denoted again by $\mb z^\star$. Hence, it readily follows that $\mb{z}(k)\to \mb z^\star$.
\end{cor}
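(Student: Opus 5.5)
The plan is to specialize the descent inequality~\eqref{eq_practical_descent} of Theorem~\ref{thm_conv} to $\rho=0$. Since $q_0(s)=s$, we have $\delta_\rho=e^{0}-1=0$. Every quantization-induced constant then vanishes: $C_{\times,\rho}$, $C_{g,\rho}$ and $C_{p,\rho}$, and hence $\varepsilon_\rho=0$ by~\eqref{eq_epsrho}. The self-consistency condition~\eqref{eq_self_consistency} holds trivially. The invariance conclusion of Theorem~\ref{thm_conv} therefore gives $\mb z(k)\in\Omega_R$ for all $k$, and
\[
\mc V(k+1)-\mc V(k)\leq -\wt\beta_1\Phi(k)-\wt\beta_2\|\mb w(k)\|^2-\wt\beta_3H(k).
\]
Telescoping from $0$ to $N-1$ and using $\mc V\geq 0$ bounds each partial sum by $\mc V(0)/\min\{\wt\beta_1,\wt\beta_2,\wt\beta_3\}\leq R/\min\{\wt\beta_1,\wt\beta_2,\wt\beta_3\}$. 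This yields the three summability statements, and each summand tends to zero.

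Next I would extract limits. The set $\Omega_R$ is compact by Lemma~\ref{lem_sublevel_envelope}, so $\mb z(k)$ is bounded and has accumulation points, all lying in the closed set $\Omega_R\subset\mc Z_D$. This gives feasibility, consistent with Lemma~\ref{lem_feas}. Along a convergent subsequence $\mb z(k_m)\to\hat{\mb z}$, continuity of $\nabla F$ (Assumption~\ref{ass_cvx}) gives $\Phi(\hat{\mb z})=\lim\Phi(k_m)=0$. Lemma~\ref{lem_xLy} with $\lambda_2>0$ then forces $\overline{\boldsymbol\phi}(\hat{\mb z})=\mb 0$, i.e.\ $B\nabla F(\hat{\mb z})=\psi\mathbf 1$.

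To identify the accumulation point, I would avoid the converse of Lemma~\ref{lem_z*} and instead use Lemma~\ref{lem_error_bound}. It gives $\overline F(\hat{\mb z})\leq \frac{u}{4v^2\lambda_2}\Phi(\hat{\mb z})=0$, so $\hat{\mb z}$ minimizes $F$ on $\mc Z_D$. Uniqueness follows from strong convexity, so $\hat{\mb z}=\mb z^\star$. Since every accumulation point of a bounded sequence equals $\mb z^\star$, the whole sequence converges.

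A more direct route is also available: $\|\mb z(k)-\mb z^\star\|^2\leq \overline F(k)/v\leq A_F\Phi(k)/v\to0$. Alternatively, \eqref{eq_V_ultimate} with $\varepsilon_\rho=0$ gives geometric decay $\mc V(k)\leq(1-\kappa_V)^k\mc V(0)$. The main obstacle is only bookkeeping: checking that $\wt\beta_i>0$ (assumed) and that the $\Omega_R$-dependent lemmas remain valid. Invariance supplies the latter, since the $\rho=0$ case makes the self-consistency condition automatic.
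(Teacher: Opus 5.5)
Your proof is correct. It shares the paper's skeleton: set $\rho=0$ so that $\delta_\rho=\varepsilon_\rho=0$, telescope \eqref{eq_practical_descent} to get summability, then identify the accumulation points. It differs at the identification step, and there it is more careful than the paper.

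The paper's argument goes from $\overline{\boldsymbol\phi}(\hat{\mb z})=\mb 0$ to $\hat{\mb z}=\mb z^\star$ by citing Lemma~\ref{lem_z*}. As stated, that lemma gives only the necessary condition, so the step tacitly uses its converse (KKT sufficiency for a convex cost on an affine set). The paper also leaves implicit the boundedness needed to pass from ``every accumulation point equals $\mb z^\star$'' to $\mb z(k)\to\mb z^\star$.

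You close both points within the paper's own framework:
\begin{itemize}
\item Boundedness follows from invariance of the compact set $\Omega_R$. This is automatic here, since \eqref{eq_self_consistency} reduces to $0\le\kappa_V R$.
\item Sufficiency follows from Lemma~\ref{lem_error_bound}, which uses the KKT identity only at $\mb z^\star$ itself.
\end{itemize}

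Your direct route, $v\|\mb z(k)-\mb z^\star\|^2\le\overline F(k)\le A_F\Phi(k)\to 0$, removes the need for accumulation points altogether. Your remark that \eqref{eq_V_ultimate} with $\varepsilon_\rho=0$ gives $\|\mb z(k)-\mb z^\star\|\le\sqrt{\mc V(0)/v}\,(1-\kappa_V)^{k/2}$ supplies an R-linear rate, which is strictly more than the asymptotic convergence the corollary claims. The paper's version is shorter; yours is more self-contained and quantitative.
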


\begin{cor}[Optimality of convergent quantized trajectories] \label{cor_quantized_limit}
	Under the assumptions of Theorem~\ref{thm_conv}, if the trajectory converges to an equilibrium $(\mb{z}^\infty, \mb{0})$, then
	\[
	F(\mb{z}^\infty) - F(\mb z^\star) \leq \frac{un {G_R^2}}{16v^2}(e^\rho - 1)^2.
	\]
\end{cor}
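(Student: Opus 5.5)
The plan is to reduce Corollary~\ref{cor_quantized_limit} to Proposition~\ref{prop_gap_quantized_eq}. That proposition already gives the bound $F(\mb z^\infty)-F(\mb z^\star)\leq \frac{un}{16v^2}\Delta_\rho^2$ with $\Delta_\rho=(e^\rho-1)G_R$, which is exactly the claimed expression. Two hypotheses remain to be checked: that $(\mb z^\infty,\mb 0)$ is a genuine equilibrium of the delayed dynamics~\eqref{eq_sol_delay1}--\eqref{eq_sol_delay2} in the sense used by Lemma~\ref{lem_quantized_equilibrium}, and that $\mb z^\infty\in\Omega_R$.

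For the membership $\mb z^\infty\in\Omega_R$, I would invoke the invariance part of Theorem~\ref{thm_conv}. Under its hypotheses, $\mb z(k)\in\Omega_R$ for every $k$. Moreover $\mc V(k)\leq R$ for all $k$ by the induction in Step~4, so $\overline F(k)\leq R$. Lemma~\ref{lem_sublevel_envelope} states that $\Omega_R$ is compact, hence closed. Therefore the limit $\mb z^\infty=\lim_k\mb z(k)$ also lies in $\Omega_R$, and feasibility $\mb z^\infty\in\mc Z_D$ passes to the limit from Lemma~\ref{lem_feas} because $\mc Z_D$ is closed.

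The main obstacle is the equilibrium step, since $q$ is discontinuous and convergence of $\mb z(k)$ does not by itself imply that $q(\boldsymbol\phi(\mb z(k)))$ converges to $q(\boldsymbol\phi(\mb z^\infty))$. I would take ``converges to an equilibrium'' literally: by hypothesis, $(\mb z^\infty,\mb 0)$ is a fixed point of the recursion, so that with constant history every delayed copy equals the current one. This is exactly the standing assumption of Lemma~\ref{lem_quantized_equilibrium}, which then yields $q(B\nabla F(\mb z^\infty))=\alpha\mb 1_n$. If instead one only had convergence of the trajectory, I would argue as follows. Each component of the update $\delta(k)\to0$ forces $BL\,\mb g$ (with delayed copies) to vanish asymptotically. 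Since $q$ takes values in a discrete set away from zero, the signals $\mb g(k)$ are eventually constant along the tail. This gives quantized consensus of the limiting quantized values, and the proof of Proposition~\ref{prop_gap_quantized_eq} applies to any $\phi^\infty$ whose components are sandwiched by~\eqref{eq_quan_sect} around a common $\alpha$. That sandwich survives the limit because the bounds $|\alpha|e^{-\rho/2}\leq|\phi_i|\leq|\alpha|e^{\rho/2}$ are closed conditions.

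With both hypotheses in place, Proposition~\ref{prop_gap_quantized_eq} applies. Substituting $\Delta_\rho=(e^\rho-1)G_R$ into~\eqref{eq_cost_gap} gives $F(\mb z^\infty)-F(\mb z^\star)\leq \frac{unG_R^2}{16v^2}(e^\rho-1)^2$, which concludes the argument.
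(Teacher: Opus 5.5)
Your main argument is correct and follows the paper, whose proof is simply to apply Proposition~\ref{prop_gap_quantized_eq}. Your explicit check that $\mb z^\infty\in\Omega_R$ (invariance from Theorem~\ref{thm_conv} plus closedness of $\Omega_R$) supplies a hypothesis the paper leaves implicit. The fallback paragraph for mere trajectory convergence is not needed, because the corollary assumes the limit is an equilibrium, and as written it is not sound: the quantized signals $q(\boldsymbol\phi(\mb z(k)))$ need not be eventually constant if some $\phi_i(\mb z^\infty)$ is zero or lies on a logarithmic cell boundary. You should drop that paragraph.
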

\begin{proof}
	Apply Proposition~\ref{prop_gap_quantized_eq}.
\end{proof}

\begin{rem}[Verifying the descent conditions]
	The conditions $\wt{\beta}_1, \wt{\beta}_2, \wt{\beta}_3 > 0$ can be satisfied for sufficiently small $\eta_\tau$ and $\mu$: the leading term in $\wt{\beta}_1$ is $O(\eta_\tau)$, the negative corrections are $O(\eta_\tau^2)$ and $O(\mu)$; $\wt{\beta}_2 > 0$ requires $\mu$ small enough; $\wt{\beta}_3 > 0$ requires ${\omega_H}$ large relative to $\eta_\tau C_H$. {The necessary ceiling $\mu<1/\sqrt3$ is only algebraic and need not be attainable after the coupled $\wt\beta_1$ and $\wt\beta_3$ restrictions are imposed. Experiments outside the explicitly computed admissible region are labeled empirical.}
\end{rem}

\begin{rem}[On the admissible $(\eta_\tau, \mu)$ region] \label{rem_momentum}
	Conditions~\eqref{eq_beta1}-\eqref{eq_beta3} define a two-dimensional sufficient region. {Proposition~\ref{prop_spectral_hb} gives the exact optimum for the delay-free quadratic baseline. Identifying the optimal $\mu^*(\eta_\tau,\bar\tau,\rho)$ for the general robust setting remains open.}
\end{rem}

\begin{rem}[Delay-step-size trade-off] \label{rem_delay_tradeoff}
	The bound on $\eta_\tau$ from $\wt{\beta}_3 > 0$ scales as $\mc{O}(1/(\bar\tau+1))$. {For small $\eta_\tau$, its leading form is $\eta_\tau<2\omega_H/(\gamma C_H)=\omega_H/(8\gamma u^2\lambda_nd_{\max}\bar\tau)$ when $\bar\tau\geq1$.}
\end{rem}

\begin{rem}[Network topology] \label{rem_topology}
	The bounds depend on $\lambda_2/\lambda_n$ and $d_{\max}$. Graphs with higher algebraic connectivity admit larger step sizes and smaller residuals \cite{qureshi2023distributed,qureshi2020s,li2022implicit}. {Table~\ref{tab_tradeoffs} summarizes the leading effects of the principal parameters.}
\end{rem}

\section{Simulation Results} \label{sec_sim}

This section validates the theoretical results through numerical experiments. All simulations are performed on an Intel Core-i5 laptop (2.50\,GHz, 8\,GB RAM). Following \cite{wood2013power,yang2013consensus}, we use quadratic costs:
\begin{align} \label{eq_f_quad}
	f_i(z_i) = \gamma_i z_i^2 + \beta_i z_i + \alpha_i,
\end{align}
with parameters in Table~\ref{tab_par}. {For the unpenalized quadratics, $v_0 = \min_i \gamma_i = 0.03$ and $u_0 = \max_i \gamma_i = 0.04$. Whenever the smooth penalty is used, the theorem must instead use the global composite constant specified in Remark~\ref{rem_penalty}.}

\begin{table}[hbpt!]
	\centering
	\caption{Power cost parameters for different energy node types \cite{wood2013power,yang2013consensus}.}
	\begin{tabular}{|c|c|c|c|}
		\hline
		Type & $\alpha_i$ (\$/h) & $\beta_i$ (\$/MWh) & $\gamma_i$ (\$/MW$^2$h) \\
		\hline
		A & 561 & 2.0 & 0.04 \\
		\hline
		B & 310 & 3.0 & 0.03 \\
		\hline
		C & 78 & 4.0 & 0.035 \\
		\hline
		D & 561 & 4.0 & 0.03 \\
		\hline
		E & 78 & 2.5 & 0.04 \\
		\hline
	\end{tabular}
	\label{tab_par}
\end{table}

\subsection{Effect of the Momentum Coefficient} \label{sec_sim_mu}

We consider the delay-free case ($\bar\tau = 0$) with $n = 40$ energy nodes (8 per type) over a sparse Erd\H{o}s\nobreakdash-R\'{e}nyi (ER) graph with $15\%$ link probability and random symmetric weights. The topology is redrawn every 100 iterations. The smooth penalty~\eqref{eq_log_penalty} is used with $m_i = 10$, $M_i = 110$, $\sigma = 2$, $\kappa = 2$. Demand is $D = 2000$\,MW with $d_i = 50$ for all nodes, $\eta = 0.05$, and $\rho = 0.0078$.

	Because this experiment redraws the topology, it is not a validation of the static-graph Theorem~\ref{thm_conv}. It is retained as an empirical topology-variation study. The previous step \nobreakdash-size ``verification,'' which used $u_0=0.04$ while omitting the smooth-penalty curvature and the other theorem coefficients, has been removed.

Fig.~\ref{fig_momentum} shows results for $\mu = 0, 0.25, 0.5, 0.75$. Larger $\mu$ yields faster decay of the cost residual. The weighted average $\frac{1}{n} \sum_{i=1}^n b_i z_i(k)$ remains constant, verifying anytime equality feasibility (Lemma~\ref{lem_feas}). The momentum variables $w_i(k)$ decay toward zero. {These are empirical curves: in particular, $\mu=0.75>1/\sqrt3$ cannot satisfy the sufficient condition~\eqref{eq_beta2}. Proposition~\ref{prop_spectral_hb}, rather than the robust theorem, explains the acceleration mechanism in the static unquantized quadratic baseline.}

\begin{figure}[!]
	\centering
	\includegraphics[width=2.45in]{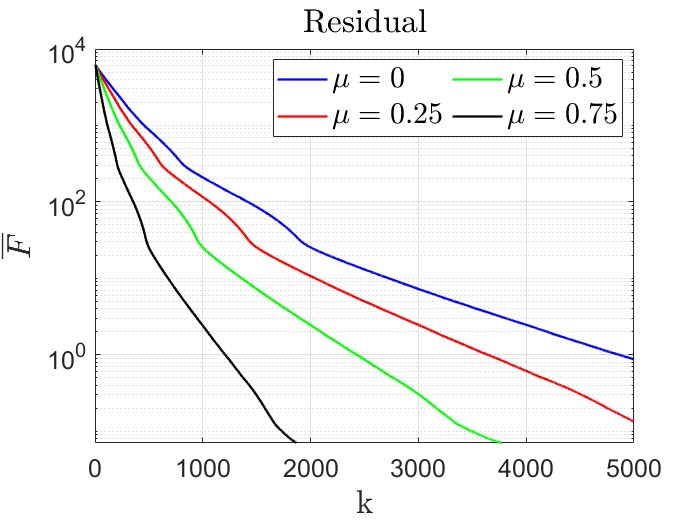}
	\includegraphics[width=2.45in]{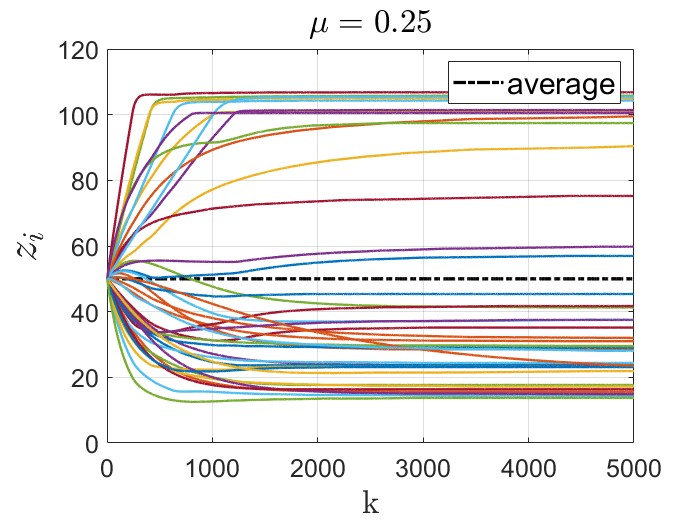}
	\includegraphics[width=2.45in]{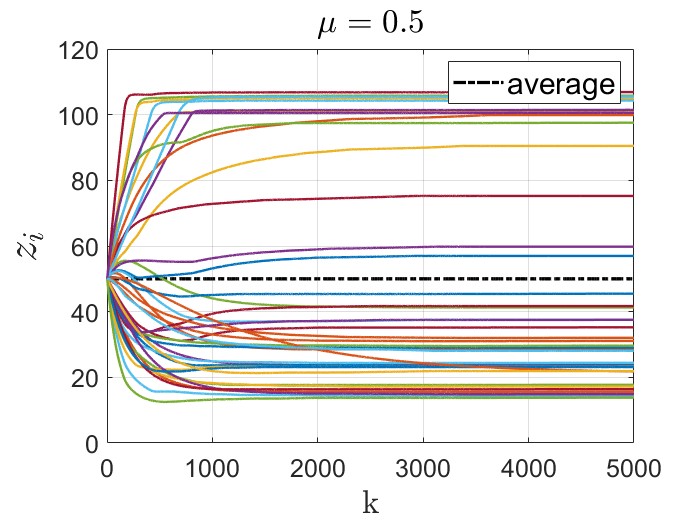}
	\includegraphics[width=2.45in]{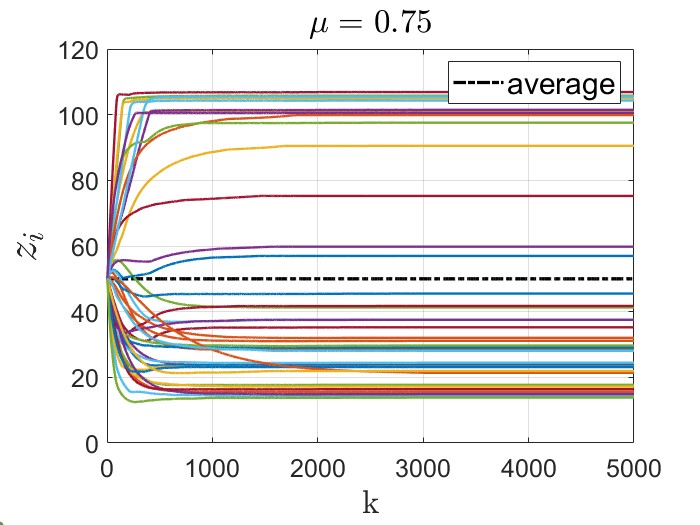}
	\includegraphics[width=2.45in]{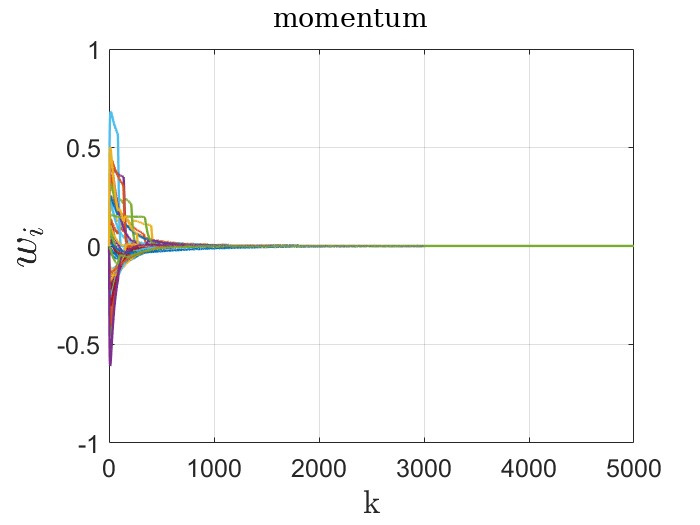}
	\caption{Effect of momentum $\mu$ on convergence ($n=40$, delay-free). Top-left: cost residual for $\mu = 0, 0.25, 0.5, 0.75$. Others: power allocations $z_i(k)$ showing constant weighted average (anytime feasibility), and momentum variables $w_i(k)$ converging to zero.}
	\label{fig_momentum}
\end{figure}

\subsection{Spectral-Rate Validation on a Fixed Sparse Graph} \label{sec_spectral_validation}

	To test the quantitative prediction in Proposition~\ref{prop_spectral_hb}, we add a static, delay-free, unquantized quadratic benchmark. It uses $n=40$, $B=I$, the five quadratic cost types in Table~\ref{tab_par} repeated eight times, and one connected weighted ER graph with link probability $0.15$, random seed $260826$, and edge weights sampled uniformly from $[0.5,1.5]$. The resulting graph has 125 edges. For its actual $Q$ and $L$,
	\[
	\theta_{\min}=0.0877407,\quad \theta_{\max}=1.119718,\quad \varkappa=12.7617.
	\]
	Equations~\eqref{eq_hb_opt}-\eqref{eq_hb_rate} therefore give $(\eta_{\rm HB},\mu_{\rm HB})=(2.18062,0.316505)$ and $r_{\rm HB}=0.562588$. The optimal zero-momentum values are $\eta_{\rm GD}=1.65637$ and $r_{\rm GD}=0.854669$. At $(\eta_{\rm HB},\mu_{\rm HB})$, the discriminant in~\eqref{eq_hb_mode} is zero at both $\theta_{\min}$ and $\theta_{\max}$ (to floating-point precision), so the extreme-mode responses can contain a factor linear in $k$. Accordingly, Fig.~\ref{fig_spectral_rate} does not use a bare $r_{\rm HB}^k$ upper bound: its dashed heavy\nobreakdash-ball curve is the exact zero-initial-momentum worst-mode recurrence bound over the actual positive spectrum, including the repeated-root polynomial factor, while $r_{\rm HB}$ labels only the asymptotic root modulus. The observed curves remain below the exact modal bounds and display the predicted acceleration. This experiment validates Proposition~\ref{prop_spectral_hb}; it is separate from the delayed quantized certificate in Section~\ref{sec_sim_delay_certified}.

\begin{figure}[t]
	\centering
	\includegraphics[width=3.75in]{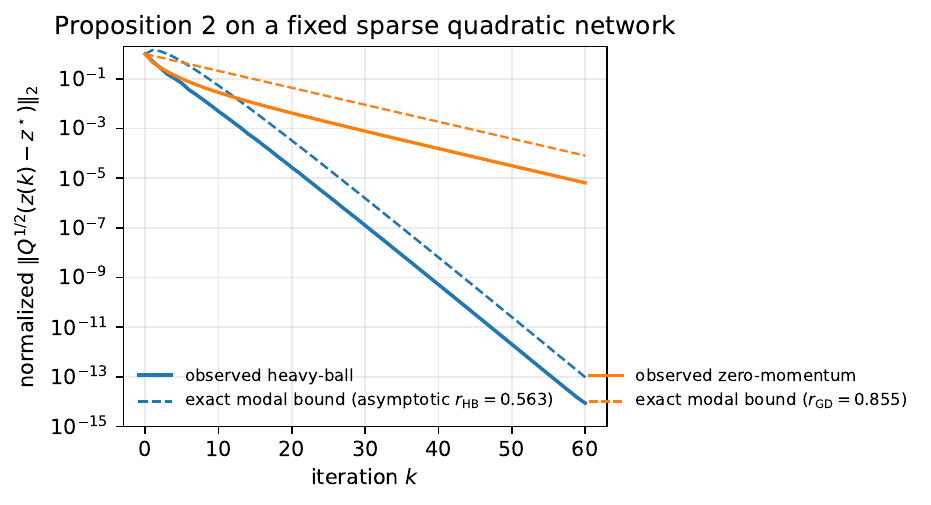}
	\caption{{Observed heavy-ball and zero-momentum errors on a fixed sparse quadratic network. Dashed curves are exact zero-initial-momentum worst-mode recurrence bounds computed from the actual positive spectrum of $Q^{1/2}LQ^{1/2}$; the displayed $r_{\rm HB}$ is an asymptotic root-modulus factor, and the heavy-ball bound retains the polynomial factor caused by repeated roots at the two extreme modes.}}
	\label{fig_spectral_rate}
\end{figure}

\subsection{Mixed Generator-Battery Signed Validation} \label{sec_mixed_signed}

	To exercise the signing matrix numerically, we add a fixed delay-free quantized instance with $n_1=12$ generator nodes, $n_2=8$ battery nodes, and $B=\operatorname{diag}(\mathbf1_{12},-\mathbf1_8)$. The connected weighted ER graph has 20 nodes, 36 edges, link probability $0.20$, and random seed $260827$. We use strongly convex quadratic costs, $z_i(0)=40$, $D=b^\top z(0)=160$, $(\eta,\mu,\rho)=(0.05,0.40,0.01)$, and 2000 iterations. Fig.~\ref{fig_mixed_signed} shows the cost gap and signed-gradient disagreement decreasing while the signed equality is conserved: the final cost gap is $4.7621\times10^{-3}$, the final signed-gradient disagreement is $4.0625\times10^{-2}$, and the maximum numerical error in $b^\top z(k)=D$ is $4.21\times10^{-12}$. At $k=2000$, the quantized signed gradients agree to machine precision and the momentum vanishes. Taking $R=F(z(0))-F(z^\star)=273.7169$ gives $G_R=12.4751$ and $\Delta_\rho=0.12538$, whereas the observed maximum pairwise signed-gradient spread is $0.02394$. Thus, the terminal numerical equilibrium satisfies the bound in Proposition~\ref{prop_gap_quantized_eq}; its nonzero disagreement plateau is consistent with Remark~\ref{rem_quantized_eq}. This is an equilibrium-bound check and empirical sign-pattern validation, not an additional Theorem~\ref{thm_conv} certificate, but unlike the generator-only studies it directly exercises both occurrences of $B$ in~\eqref{eq_L}.

\begin{figure*}[t]
	\centering
	\includegraphics[width=0.95\textwidth]{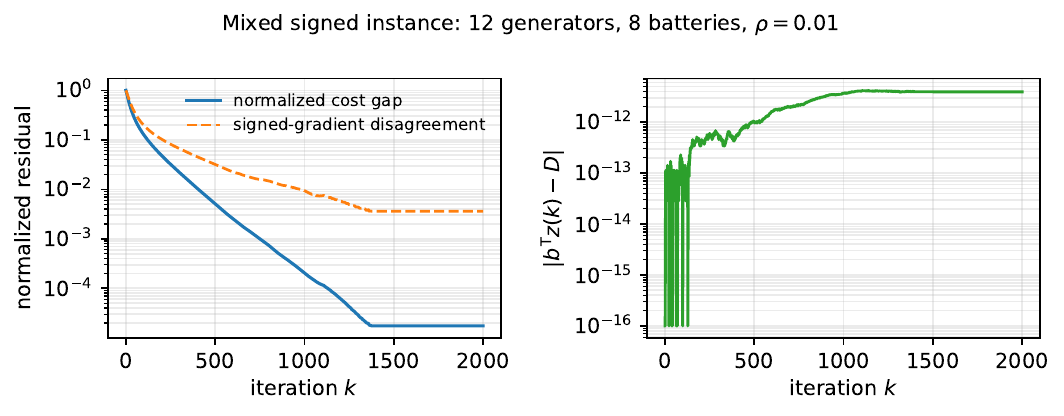}
	\caption{{Empirical mixed signed validation with 12 generators ($b_i=+1$) and 8 batteries ($b_i=-1$) on a fixed sparse graph. Left: normalized cost gap and signed-gradient disagreement. Right: numerical error in the conserved coupling equality $b^\top z(k)=D$. The gradual increase in the equality-error trace remains at floating-point scale and reflects accumulated roundoff in an algebraically cancelling update, not loss of the conservation invariant.}}
	\label{fig_mixed_signed}
\end{figure*}

\subsection{Effect of the Quantization Resolution} \label{sec_sim_quant}

All parameters, {including the topology redrawn every 100 iterations}, are as in Section~\ref{sec_sim_mu}, with $\mu = 0.5$ and $\rho = \frac{1}{16}, \frac{1}{64}, \frac{1}{128}, \frac{1}{256}$. {Accordingly, this is also an empirical switching-topology study outside Theorem~\ref{thm_conv}.} Fig.~\ref{fig_sim_quant} shows stable behavior for all levels, with finer quantization yielding smaller steady-state residuals. The $O(\delta_\rho^2)$ scaling of~\eqref{eq_epsrho} motivates this comparison, but the switching-topology curves are not used as theorem validation. Fig.~\ref{fig_quant_comp} compares log-scale and uniform quantization \cite{10605038} at $\rho = 0.0156$; uniform quantization produces a substantially larger residual due to its dead-zone.

\begin{figure}[t]
	\centering
	\includegraphics[width=3in]{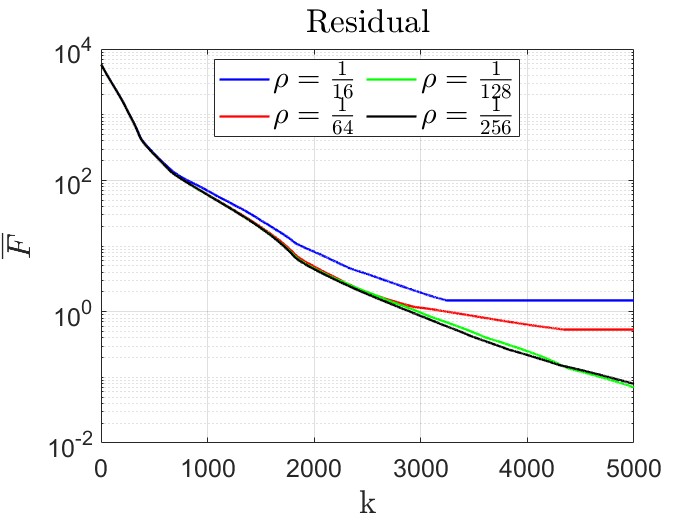}
	\caption{Cost residual under different log-scale quantization levels $\rho$.}
	\label{fig_sim_quant}
\end{figure}

\begin{figure}[t]
	\centering
	\includegraphics[width=3in]{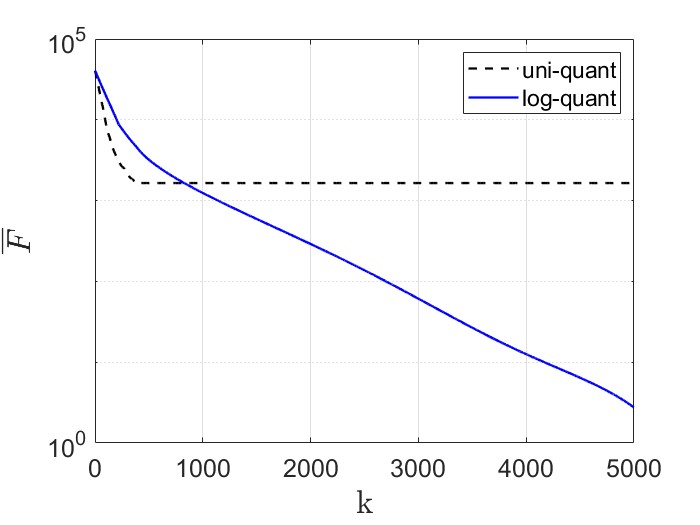}
	\caption{Comparison of uniform \cite{10605038} and log-scale quantization at $\rho = 0.0156$.}
	\label{fig_quant_comp}
\end{figure}

\subsection{Comparison with Existing Methods} \label{sec_sim_compare}

We compare Algorithm~\ref{alg_1} ($\mu = 0.9$) against: linear \cite{cherukuri2015distributed}, accelerated \cite{guo2019accelerated} ($\alpha = 0.2$, $\beta = 0.9$), single-bit \cite{taes}, fixed\nobreakdash-time \cite{liu2023distributed} ($\zeta = 0.25$), and saturated \cite{dsp} ($\delta = 1$). {Each comparison method uses the parameter values from its original reference and was not retuned for this instance. Under these stated settings, Fig.~\ref{fig_compare} shows that Algorithm~\ref{alg_1} exhibits the fastest residual decay. The figure therefore compares the stated settings rather than the best achievable performance of each method.} Table~\ref{tab_comp} reports comparable wall-clock times. {This is a delay-free empirical comparison; $\mu=0.9$ is outside Theorem~\ref{thm_conv}, and no claim is made that the comparison methods or the broader literature cannot handle delays. Proposition~\ref{prop_spectral_hb} does permit any $\mu<1$, including $0.9$, for a sufficiently small step size in the static delay-free unquantized quadratic baseline; the present switching/quantized comparison nevertheless remains empirical. Recent delay-robust alternatives are discussed in Section~\ref{sec_intro}. We do not present a direct numerical comparison with the continuous-time scattering and frequency-control schemes in \cite{shi2025delayed,li2023delay}, because their plant models, state variables, and delay guarantees differ materially; constructing a common benchmark would require an additional modeling study.}

\begin{table*}[hbpt!]
	\centering
	\caption{Wall-clock time (5000 iterations) for different DERMS algorithms.}
	\begin{tabular}{|c|c|c|c|c|}
		\hline
		Algorithm & Run Time & Algorithm & Run Time \\ \hline Linear \cite{cherukuri2015distributed} & 0.742\,s & Accelerated \cite{guo2019accelerated} & 0.797\,s \\ \hline Single-bit \cite{taes} & 0.791\,s & Fixed-time \cite{liu2023distributed} & 1.147\,s\\ \hline   Saturated \cite{dsp} & 0.926\,s & This work & 0.747\,s \\
		\hline
		\hline
	\end{tabular}
	\label{tab_comp}
\end{table*}

\begin{figure}[t]
	\centering
	\includegraphics[width=3in]{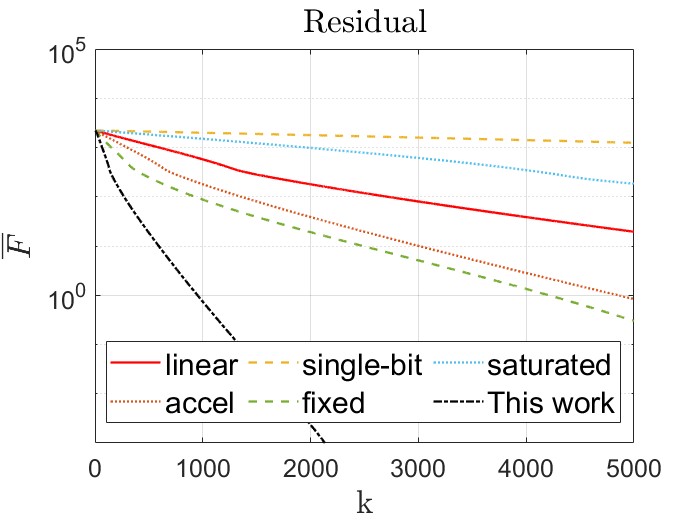}
	\caption{Cost residual comparison with existing DERMS methods.}
	\label{fig_compare}
\end{figure}

\subsection{Corrected Delayed Certificate} \label{sec_sim_delay_certified}

	The previously submitted parameters $(\mu,\eta_\tau,\rho)=(0.4,0.03,0.0156)$ were incorrectly described as certified and have been withdrawn. We instead use a static complete graph of $n=25$ generator nodes with $A_{ij}=1/n$, the five cost types in Table~\ref{tab_par} repeated five times, $m_i=10$, $M_i=110$, $(\sigma,\kappa)=(2,2)$, $D=1000$ MW, and $z_i(0)=40$. The paired edge delays are sampled in $\{0,1,2\}$. Equation~\eqref{eq_penalty_u_sharp} gives $u<0.55$, so $v=0.03$ and $u=0.55$ are valid global composite constants. {The penalized optimizer lies in $[27.46,49.03]$ MW, strictly inside $[10,110]$; thus this certificate isolates the delayed smooth-composite theorem and is not presented as an active-box test. We take $R=5\mc V(0)$, rather than the zero-margin choice $R=\mc V(0)$, to certify a nontrivial initialization neighborhood.} Table~\ref{tab_certificate} reports every design and derived constant. In particular,
	$\wt\beta_1=3.8125\times10^{-3}$, $\wt\beta_2=2.74773\times10^{-1}$, and $\wt\beta_3=2.4447\times10^{-2}$ are positive, while
	$\varepsilon_\rho=1.8355\times10^{-4}<\kappa_VR=8.2643\times10^{-3}$ closes the invariant-sublevel check. {The proof auxiliaries were retuned by a deterministic bounded search that minimized $\varepsilon_\rho/\kappa_V$ subject to positive descent coefficients and self-consistency. These auxiliaries enter only the analysis: changing them changes the certified bound, not Algorithm~\ref{alg_1} or its numerical trajectory. Among near-minimizers, we selected the rounded interior point in Table~\ref{tab_certificate} while retaining a finite-$N$ bound below $3.1$ at $N=6000$. This avoids replacing an asymptotically weak certificate with a finite-horizon one. The resulting certified steady-state Lyapunov level is
		\[
		\varepsilon_\rho/\kappa_V=7.3555<\mc V(0)=66.2355,
		\]
		whereas the untuned auxiliary values gave $198.18$. The theorem now guarantees an improvement from the reported initialization, not merely non-escape from $\Omega_R$.}
	
	{In Fig.~\ref{fig_delay_certified}, at state index $k=6000$, the cost gaps are $1.6142\times10^{-2}$ for $\mu=0.09$ and $3.3180\times10^{-2}$ for $\mu=0$, providing a within-region momentum comparison. The center panel plots the theorem's actual finite-$N$ ergodic bound, not a cost-gap surrogate. At horizon $N=6000$, the bound is $2.9437$, compared with an observed running residual of $1.0572$. Equality feasibility is preserved to floating-point precision. For the auxiliary values in Table~\ref{tab_certificate}, condition~\eqref{eq_beta2} becomes $\widetilde\beta_2=2.1-19.5\mu-8.67\mu^2>0$, so the actual ceiling from this coefficient is $\mu<0.10298$; the reported $\mu=0.09$ is about $13\%$ below that ceiling.}
	
	{For the fixed Table~\ref{tab_certificate} auxiliaries and $R=5\mc V(0)$, solving $\varepsilon_\rho(\rho)=\kappa_VR$ gives the explicit certified ceiling $\rho_{\max}=6.7091\times10^{-4}$. Thus the theorem-certified resolution remains more than an order of magnitude finer than the empirical $\rho=1/64$ used in Section~\ref{sec_sim_quant}. Increasing $R$ cannot manufacture certification: since $G_R^2/R\to4u^2/v$, both $\varepsilon_\rho$ and $\kappa_VR$ scale linearly in $R$ to leading order, and the fixed-auxiliary ceiling saturates at $7.0528\times10^{-4}$ as $R\to\infty$. A separate deterministic search over the same auxiliary box, with $\mu=0$ and the same finite-$N$ guard, reaches only $\rho_{\max}=9.1920\times10^{-4}$. Thus even favorable retuning remains at the $10^{-3}$ scale, and the large gap to $1/64$ is structural rather than evidence of an unsearched certificate. Condition~\eqref{eq_self_consistency} is fundamentally a restriction on the quantization resolution and proof auxiliaries, not a consequence of choosing a loose initialization sublevel.}

	\begin{table*}[t]
		\centering
		\scriptsize
		\caption{{Complete numerical certificate for Theorem~\ref{thm_conv}.}}
		\begin{tabular}{|l|c|l|c|l|c|}
			\hline
			\multicolumn{6}{|c|}{\textit{Implemented problem and algorithm parameters}} \\ \hline
			$n$ & 25 & $\lambda_2=\lambda_n$ & 1 & $d_{\max}$ & 0.96 \\ \hline
			$v$ & 0.03 & $u$ & 0.55 & $\bar\tau$ & 2 \\ \hline
			$\rho$ & $10^{-4}$ & $\eta_\tau$ & 0.0095 & $\mu$ & 0.09 \\ \hline
			\multicolumn{6}{|c|}{\textit{Proof auxiliaries (analysis only)}} \\ \hline
			${\omega_w}$ & 4.2 & ${\omega_H}$ & 0.12 & $(\epsilon,\gamma,\xi)$ & $(39,2,0.125)$ \\ \hline
			\multicolumn{6}{|c|}{\textit{Derived certificate quantities}} \\ \hline
			$\psi^\star$ & 5.922131 & $\mc V(0)$ & 66.235485 & $R=5\mc V(0)$ & 331.177425 \\ \hline
			$G_R$ & 121.496737 & $C_H$ & 9.292800 & $C_{p,\rho}$ & 0.0141717 \\ \hline
			$C_{\times,\rho}$ & 0.00369055 & $C_{g,\rho}$ & 0.00354293 & $A_F$ & 152.777778 \\ \hline
			$\kappa_V$ & $2.49543\!\times\!10^{-5}$ & $\wt\beta_1$ & 0.00381247 & $\wt\beta_2$ & 0.274773 \\ \hline
			$\wt\beta_3$ & 0.0244471 & $\varepsilon_\rho$ & 0.000183553 & $\kappa_VR$ & 0.00826431 \\ \hline
			certificate & satisfied & $\mu=0$ baseline & satisfied & $\varepsilon_\rho/\min_i\wt\beta_i$ & 0.0481453 \\ \hline
			$\varepsilon_\rho/\kappa_V$ & 7.35554 & finite-$N$ bound ($N=6000$) & 2.94371 & $\rho_{\max}$ & 0.00067091 \\ \hline
		\end{tabular}
		\label{tab_certificate}
	\end{table*}

\begin{figure*}[t]
	\centering
	\includegraphics[width=0.99\textwidth]{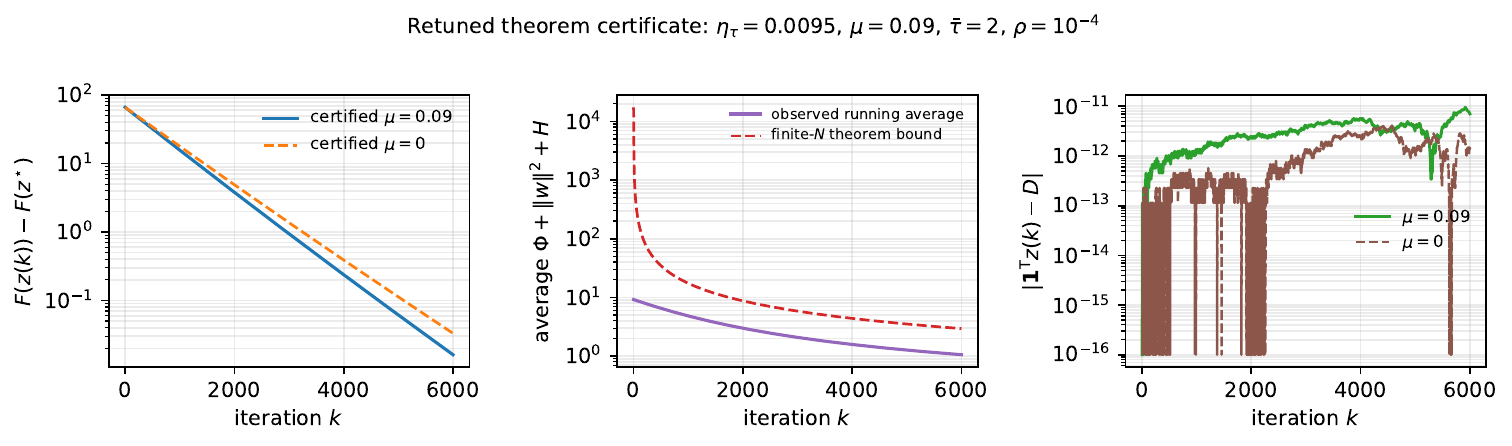}
	\caption{{Retuned theorem-certified delayed experiment. Left: composite-cost gaps for the certified $\mu=0.09$ dynamics and a separately verified $\mu=0$ baseline. Center: running average of $\Phi+\|w\|^2+H$ and its finite-$N$ bound $[\mc V(0)/N+\varepsilon_\rho]/\min_i\wt\beta_i$. Right: equality-feasibility errors; their gradual increase remains at floating-point scale and reflects accumulated roundoff in algebraically cancelling updates, not loss of the conservation invariant. Parameters and coefficients are in Table~\ref{tab_certificate}.}}
	\label{fig_delay_certified}
\end{figure*}

\begin{figure}[t]
	\centering
	\includegraphics[width=3.25in]{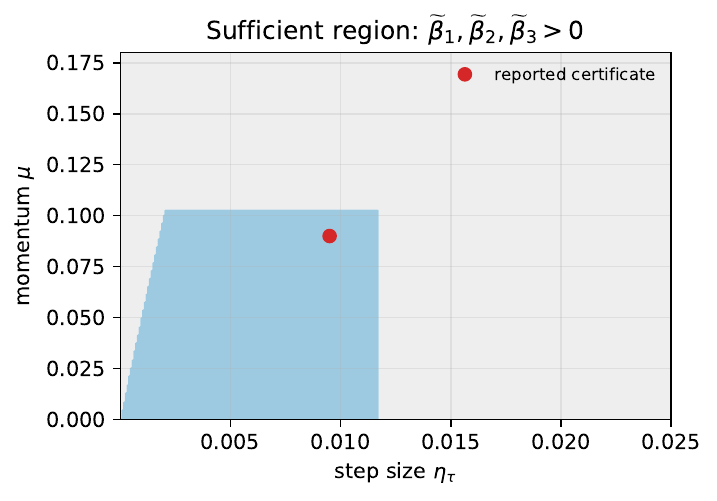}
	\caption{{Sufficient $(\eta_\tau,\mu)$ region for the fixed auxiliary parameters in Table~\ref{tab_certificate}; the red point is the reported certificate.}}
	\label{fig_admissible_region}
\end{figure}

\subsection{Delayed Stress Test Beyond Theorem Assumptions} \label{sec_sim_delay}

We now turn to a delayed stress test beyond the theorem assumptions.

We test with $n = 25$ nodes, ER graph with $20\%$ link probability, non-smooth penalty~\eqref{eq_log_penalty0} with $m_i = 10$, $M_i = 110$, $\sigma = 3$, $D = 1000$\,MW, $d_i = 40$, $\mu = 0.8$, $\eta_\tau = 0.15$, $\rho = 0.0312$, and symmetric delays with bounds $\bar\tau = 2, 5, 10, 16$.

	No theorem step-size bound is reported for this stress test. It uses the nonsmooth hinge penalty and $\mu=0.8>1/\sqrt3$, so Assumption~\ref{ass_cvx} and condition~\eqref{eq_beta2} fail regardless of the chosen $\eta_\tau$. The earlier partial check of only $\wt\beta_1$ and $\wt\beta_3$ was therefore invalid and has been removed.

\textit{Interpretation of the delayed experiments.} {For the representative $\bar\tau=10$ run, anytime equality feasibility (Lemma~\ref{lem_feas}) is illustrated by the constant weighted average in the top-right panel of Fig.~\ref{fig_delay}; the lemma establishes the conservation identity for every reported paired symmetric-delay bound.} {These curves are only an empirical robustness study beyond the theorem assumptions. Convergence for $\bar\tau\in\{2,5,10\}$ and divergence for $\bar\tau=16$ are observations, not evidence that the sufficient theorem boundary has been crossed by a certified amount.}

\begin{figure}[!t]
	\centering
	\includegraphics[width=2.62in]{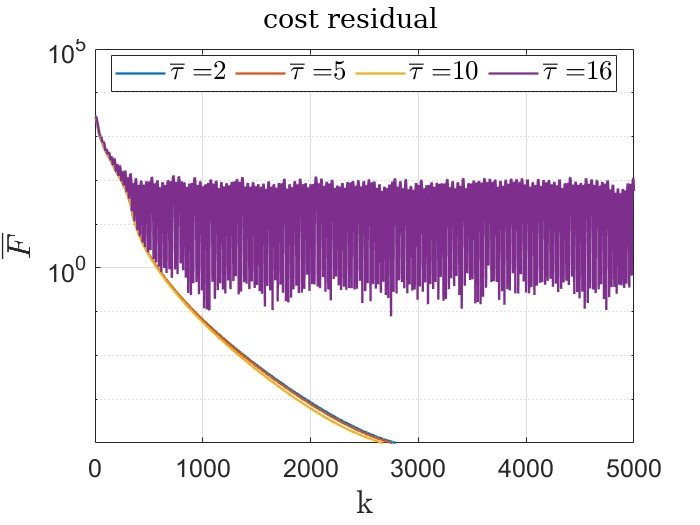}
	\includegraphics[width=2.62in]{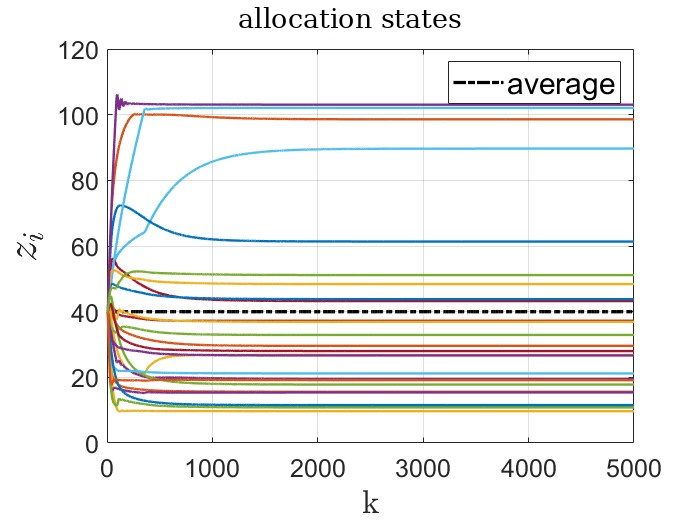}\\[-0.4ex]
	\includegraphics[width=2.62in]{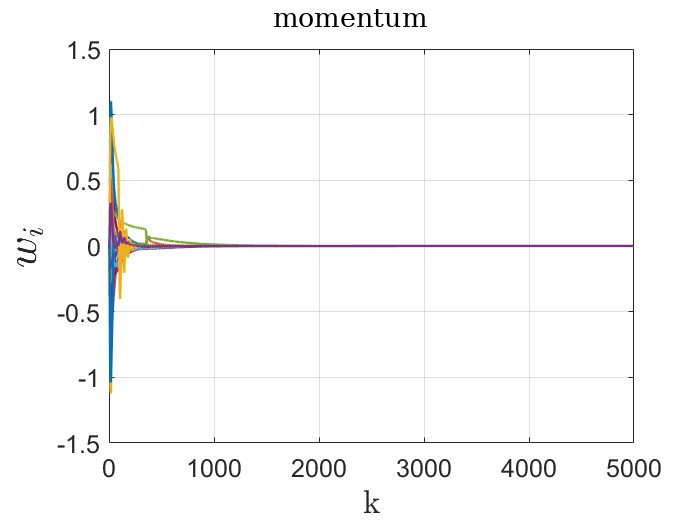}
	\caption{{Delayed stress test ($n=25$, $\eta_\tau=0.15$). Top-left: cost residuals for $\bar\tau\in\{2,5,10,16\}$. Top-right: allocation states for $\bar\tau=10$, with constant weighted average. Bottom: momentum variables for $\bar\tau=10$. The nonsmooth penalty and $\mu=0.8$ place these runs outside Theorem~\ref{thm_conv}.}}
	\label{fig_delay}
\end{figure}

\subsection{Nonquadratic Smooth Costs} \label{sec_sim_nonlinear}

To illustrate that the method is not limited to quadratic costs, we test smooth strongly convex original local objectives of the form
\[
f_i^0(z_i) = a_i z_i^2 + c_i \log(1 + e^{z_i}) + b_i z_i, \qquad a_i > 0,~ c_i \geq 0.
\]
The unpenalized Hessian satisfies
\[
2a_i \leq \partial_{z_i}^2 f_i^0(z_i) = 2a_i + c_i\frac{e^{z_i}}{(1+e^{z_i})^2} \leq 2a_i + \frac{c_i}{4},
\]
so $v=\min_i a_i=0.03$ and the unpenalized upper constant is at most $u_0=\max_i(a_i+c_i/8)=0.1025$. The dynamics, however, use the composite objective $\wh f_i=f_i^0+f_i^\sigma$. With $M_i-m_i=100$, $(\sigma,\kappa)=(2,2)$, and~\eqref{eq_penalty_u_sharp}, a valid global composite constant is
\[
u\leq u_0+\frac{\sigma\kappa}{8}+\frac{\sigma\kappa}{2}e^{-\kappa(M_i-m_i)/2}{<0.61}.
\]
We use $n = 40$ nodes with parameters drawn from the ranges $a_i \in [0.03, 0.04]$, $c_i \in [0, 0.5]$, $b_i \in [2, 4]$, the smooth penalty~\eqref{eq_log_penalty} with $m_i=10$, $M_i=110$, $\sigma = 2$, $\kappa = 2$, $D = 2000$\,MW, $\eta = 0.04$, $\rho = 0.0078$, and $\mu = 0.5$.

Fig.~\ref{fig_nonlinear} shows that the $\mu=0.5$ run reaches the common residual floor sooner than the $\mu=0$ run. {This is evidence only of an empirical transient benefit beyond the quadratic benchmark. }

\begin{figure}[htbp]
	\centering
	\includegraphics[width=3in]{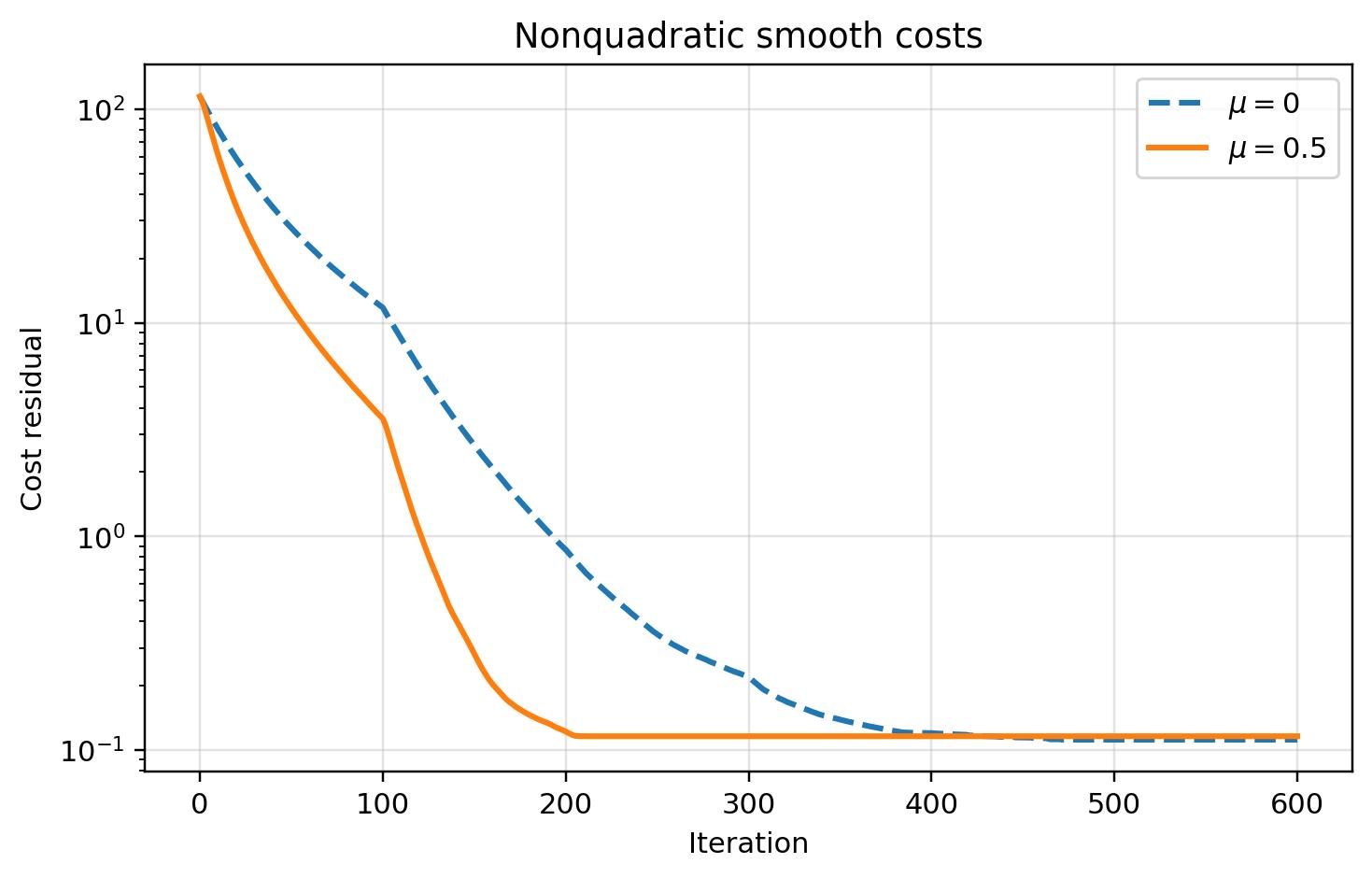}
	\caption{Cost residual for nonquadratic smooth strongly convex costs with $\mu = 0$ (dashed) and $\mu = 0.5$ (solid). In this empirical run, momentum improves the transient and both curves approach the same residual floor.}
	\label{fig_nonlinear}
\end{figure}

\subsection{Discussion and Limitations}

	Only Section~\ref{sec_sim_delay_certified} is claimed as a direct numerical certificate for Theorem~\ref{thm_conv}; it reports all constants, satisfies the invariant-sublevel test, and compares certified $\mu=0.09$ and $\mu=0$ cases. The other figures are empirical sensitivity, comparison, or stress studies. They illustrate equality conservation, transient acceleration, the signed generator-battery mechanism, and the effect of quantization, but their high momentum values, switching graphs, or nonsmooth penalties place them outside one or more sufficient assumptions. {The complete graph and very small $\rho=10^{-4}$ were chosen to keep the certificate transparent and tractable; this example does not demonstrate a sparse certified network or coarse quantization. Certification on sparse graphs is more conservative because $A_F$ grows as $\lambda_2$ decreases. The certificate itself uses generator nodes only, so $B=I$, but Section~\ref{sec_mixed_signed} now supplies a separate fixed sparse experiment with 12 generators and 8 batteries that directly exercises $B=\operatorname{diag}(+1,-1)$ and verifies signed equality conservation numerically. Extending the \emph{fully certified} benchmark to a mixed signed sparse instance remains future work. The theorem therefore supplies a guaranteed but conservative robust region: the empirical studies tolerate coarser quantization, larger momentum, larger step sizes, and switching or sparse graphs, but those observations are not promoted to guarantees.} This separation avoids using successful simulations as a substitute for the theorem conditions.

\section{Concluding Remarks and Future Directions} \label{sec_conc}

	This paper establishes exact per-iteration conservation of the coupling equality for a distributed signed-gradient method operating jointly with paired delays, fixed logarithmic quantization, and heavy-ball momentum. The associated discrete-time quantizer-plus-delay recursion is a hybrid system whose continuous states are driven by discrete quantizer-cell and delay selections. Its self-contained Lyapunov\nobreakdash-Krasovskii practical\nobreakdash-stability theorem applies on a static connected graph under bounded symmetric delays. The proof explicitly separates smooth delayed-gradient variation from discontinuous quantization error, treating the latter as an additive disturbance inside the Krasovskii functional, and uses an invariant-sublevel certificate rather than an assumed compact orbit. Equality feasibility is preserved at every iteration; finite smooth penalties, however, do not guarantee per-iterate box feasibility. Quantized equilibria within the certified sublevel have an explicit $O(\rho^2)$ gap relative to the smooth penalized optimizer, and exact convergence to that optimizer is recovered for $\rho=0$. Proposition~\ref{prop_spectral_hb} gives an exact momentum-acceleration characterization for the quadratic baseline; the fixed sparse-graph experiment confirms its modal recurrence and asymptotic rate, and the mixed generator-battery experiment directly exercises the signed dynamics. These conservation and discontinuous\nobreakdash-disturbance arguments are the main reusable contributions beyond the particular DERMS instance.

Several directions remain for future work:

\textit{Optimal momentum selection.} The admissible $(\eta_\tau, \mu)$ region defined by conditions~\eqref{eq_beta1}-\eqref{eq_beta3} is sufficient but likely conservative. {The quadratic delay-free optimum is given by~\eqref{eq_hb_opt}; identifying $\mu^*(\eta_\tau,\bar\tau,\rho)$ for nonlinear delayed quantized dynamics remains open.}

\textit{Asymmetric delays.} Anytime feasibility is proved under symmetric delays ($\tau_{ij} = \tau_{ji}$). Extending to fully asymmetric delays via pairwise exchange buffers or conservation mechanisms is an important open problem.

\textit{Uniform connectivity.} {The practical-stability proof currently requires a static connected graph. A windowed Lyapunov argument and a delay model compatible with changing edges would be needed for $B$-connectivity.}

\textit{Scalability and nonlinear costs.} Systematic experiments on large-scale networks and nonlinear cost models would further validate practical applicability.

\textit{Directed graphs.} The feasibility proof relies on undirected communication. Extending via push-sum or push-pull mechanisms \cite{nedic2015distributed,pu2021distributed} while preserving anytime feasibility remains open.

\bibliographystyle{elsarticle-num} 
\bibliography{bibliography}

\end{document}